\newtheorem{theorem}{Theorem}[section]
\newtheorem{lemma}[theorem]{Lemma}
\newtheorem{definition}[theorem]{Definition}
\newcommand{\onetwonorm}[1]{||#1||_{1,2}}
\newcommand{\opnorm}[1]{\|#1\|}
\newcommand{\fronorm}[1]{\|#1\|_{F}}
\newcommand{\twonorm}[1]{\|#1\|_{\ell_2}}
\newcommand{\nucnorm}[1]{||#1||_*}
\newcommand{\cA}{\mathcal{A}}
\newcommand{\R}{\mathbb{R}}
\newcommand{\<}{\langle}
\renewcommand{\>}{\rangle}
\newcommand{\goto}{\rightarrow}
\renewcommand{\P}{\operatorname{\mathbb{P}}}
\newcommand{\E}{\operatorname{\mathbb{E}}}
\newcommand{\rank}{\operatorname{rank}}
\newcommand{\trace}{\operatorname{trace}}
\numberwithin{equation}{section}
\def \endprf{\hfill {\vrule height6pt width6pt depth0pt}\medskip}
\newenvironment{proof}{\noindent {\bf Proof} }{\endprf\par}
\title{Tight oracle bounds for low-rank matrix recovery\\
  from a minimal number of random measurements}
\author{Emmanuel J. Cand\`es$^{1,2}$ and Yaniv Plan$^{3}$\thanks{Corresponding author: Yaniv Plan. Email: yanivplan@gmail.com}\\
  \vspace{-.1cm}\\
  $^1$Department of Statistics, Stanford University, Stanford, CA 94305\\
  \vspace{-.3cm}\\
  $^2$Department of Mathematics, Stanford University, Stanford, CA 94305\\
  \vspace{-.3cm}\\
  $^3$Applied and Computational Mathematics, Caltech, Pasadena, CA
  91125}
\date{December 2009}
\begin{document}

\maketitle

\vspace{-0.3in}

\begin{abstract}
  This paper presents several novel theoretical results regarding the
  recovery of a low-rank matrix from just a few measurements
  consisting of linear combinations of the matrix entries.  We show
  that properly constrained nuclear-norm minimization stably recovers
  a low-rank matrix from a constant number of noisy measurements per
  degree of freedom; this seems to be the first result of this nature.
  Further, the recovery error from noisy data is within a constant of
  three targets: 1) the minimax risk, 2) an `oracle' error that would be available if the column space of the matrix were known, and 3) a more adaptive `oracle' error which would be available with the knowledge of the column space corresponding to the part of the matrix that stands above the noise. Lastly, the error bounds regarding
  low-rank matrices are extended to provide an error bound when the
  matrix has full rank with decaying singular values.  The analysis in
  this paper is based on the restricted isometry property (RIP)
  introduced in \cite{CT05} for vectors, and in \cite{Recht07} for
  matrices.
\end{abstract}

{\bf Keywords.}  Matrix completion, The Dantzig selector, oracle
inequalities, norm of random matrices, convex optimization and
semidefinite programming.

\maketitle

\section{Introduction}
Low-rank matrix recovery is a burgeoning topic drawing the attention
of many researchers in the closely related field of sparse
approximation and compressive sensing.  To draw an analogy, in the
sparse approximation setup, the signal $y$ is modeled as a sparse
linear combination of elements from a dictionary $D$ so that
\[
y = Dx, 
\] 
where $x$ is a sparse coefficient vector.  The goal is to recover $x$.
In the matrix recovery problem, the signal to be recovered is a
low-rank matrix $M \in \R^{n_1 \times n_2}$, about which we have
information supplied by means of a linear operator $\cA:\R^{n_1 \times
  n_2} \rightarrow \R^m$ (typically, $m$ is far less than $n_1 n_2$), 
\[
y = \cA(M), \quad y \in \R^m. 
\]
In both cases, signal recovery appears to be an ill-posed problem
because there are many more unknowns than equations.  However, as has
been shown extensively in the sparse-approximation literature, the
assumption that the object of interest is sparse makes this problem
meaningful even when the linear system of equations is apparently
underdetermined.  Further, when the measurements are corrupted by
noise, we now know that by taking into account the parsimony of the
model, one can insure that the recovery error is within a log factor
of the error one would achieve by regressing $y$ onto the
low-dimensional subspace spanned by those columns with $x_i \neq 0$;
the squared error is {\em adaptive}, and proportional to the true
dimension of the signal \cite{DS, Bickel07}.

In this paper, we derive similar results for matrix recovery. In
contrast to results available in the literature on compressive sensing
or sparse regression, we show that the error bound is within a
constant factor (rather than a log factor) of an idealized `oracle'
error bound achieved by projecting the data onto a smaller subspace
given by the `oracle' (and also within a constant of the minimax error
bound).  This error bound also applies to full-rank matrices (which
are well-approximated by low-rank matrices), and there appears to be
no analogue of this in the compressive sensing world.

Another contribution of this paper is to lower the number of
measurements to stably recover a matrix of rank $r$ by convex
programming. It is not hard to see that we need at least $m \ge
(n_1+n_2-r)r$ measurements to recover matrices of rank $r$, by any
method whatsoever. To be sure, if $m < (n_1+n_2-r)r$, we will always
have two distinct matrices $M$ and $M'$ or rank at most $r$ with the
property $\cA(M) = \cA(M')$ no matter what $\cA$ is.  To see this, fix
two matrices $U \in \R^{n_1 \times r}$, $V \in \R^{n_2 \times r}$ with
orthonormal columns, and consider the linear space of matrices of the
form
\[
T = \{UX^* - YV^* : X \in \R^{n_2 \times r}, Y \in \R^{n_1 \times r}\}.
\]
The dimension of $T$ is $r(n_1+n_2-r)$. Thus, if $m < (n_1+n_2-r)r$,
there exists $M = UX^* - YV^* \neq 0$ in $T$ such that $\cA(M) =
0$. 
This proves the claim since $\cA(UX^*) = \cA(YV^*)$ for two distinct
matrices of rank at most $r$.
Now a novel result of this paper is that, even without knowing that
$M \in T$, one can stably recover $M$ from a constant times $(n_1+
n_2)r$ measurements via nuclear-norm minimization.  Once again, in
contrast to similar results in compressive sensing, the number of
measurements required is within a constant of the theoretical lower
limit -- there is no extra log factor.

\subsection{A few applications}
Following a series of advances in the theory of low-rank matrix
recovery from undersampled linear measurements \cite{Recht07, CR08,
  CT09, CP09, MontanariNoiseless, MontanariNoisy, Lee2009, Ma09,
  Jain09}, a number of new applications have sprung up to join ranks
with the already established ones. A quick survey shows that low-rank
modeling is getting very popular in science and engineering, and we
present a few eclectic examples to illustrate this point.
\begin{itemize}
\item \textbf{Quantum state tomography \cite{Gross09}.} In quantum
  state tomography, a mixed quantum state is represented as a square
  positive semidefinite matrix, $M$ (with trace 1).  If $M$ is actually a pure
  state, then it has rank 1, and more generally, if it is
  approximately pure then it will be well approximated by a low-rank
  matrix \cite{Gross09}.
\item \textbf{Face recognition \cite{Ma09, Basri03}.} Here the
  sequence of signals $\{y_i\}$ are images of the same face under
  varying illumination.  In theory and under idealized circumstances
  (the images are assumed to be convex, Lambertian objects), these
  faces all reside near the same nine-dimensional linear subspace
  \cite{Basri03}.  In practice, face-recognition techniques based on
  the assumption that these images reside in a low-dimensional
  subspace are highly successful \cite{Basri03, Ma09}.
\item \textbf{Distance measurements.}  Let $x_i \in \R^d$ be a
  sequence of vectors representing several positions in $d$
  dimensional space, and let $M$ be the matrix of squared distances
  between these vectors, $M_{i,j} = \twonorm{x_i - x_j}^2$.  Then $M$
  has rank bounded by $d+2$.  To see this, let $X = [x_1, x_2, \ldots,
  x_n]$ be a concatenation of the positions vectors.  Then, letting
  $\{e_i\}$ be the standard basis vectors,
\[
M_{i,j} = (e_i - e_j)^* X^* X (e_i - e_j) = -2e_i^* X^* X e_j + e_i^*
\mathbb{1} q^* e_j + e_i^* q \mathbb{1}^* e_j, 
\] 
where $\mathbb{1}$ is a vector containing all ones, and $q$ is a
vector with $q_i = \<x_i, x_i\>$.  Thus $M = X^*X + \mathbb{1}q^* +
q\mathbb{1}^*$.  The first matrix has rank bounded by $d$ and the
second two have rank bounded by 1.  In fact, one can project out
$\mathbb{1}q^* + q\mathbb{1}^*$ in order to reduce the rank to $d$,
which in usual applications will be 2 for positions constrained to lie
in the plane, and 3 for positions constrained to lie somewhere in
space.
\end{itemize}

Quantum state tomography lends itself perfectly to the compressive
sensing framework. On an abstract level, one sees measurements
consisting of linear combinations of the unknown quantum state $M$ --
inner products with certain observables which can be chosen with some
flexibility by the physicist -- and the goal is to recover a good
approximation of $M$.  The size of $M$ grows exponentially with the
number of particles in the system, so one would like to use the
structure of $M$ to reduce the number of measurements required, thus
necessitating compressive sensing (see \cite{Gross09} for a more in
depth discussion and a specific analysis of this problem). 
\footnote{
An interesting point about quantum state tomography is that if one enforces the constraints $\trace(M) = 1$ and $M\succeq 0$ then this ensures that $\nucnorm{M} = 1$, and the scientist is left with a feasibility problem.  In \cite{Gross09} the authors suggest to solve this feasibility problem by removing a constraint and then performing nuclear-norm minimization and they show that under certain conditions this is sufficient for exact recovery (and thus of course the solution obeys the unenforced constraint).} 
Another
more established example is sensor localization, in which one sees a
subset of the entries of a distance matrix because the sensors have
low power and can only sense reliably its distance to nearby sensors.
The goal is to fill in the missing entries (matrix completion).  In
some applications of the face recognition example, one would see the
entire set of faces (the sampling operator is the identity), and the
low-rank structure can be used to remove sparse errors, but otherwise
arbitrarily gross, from the data as described in \cite{Ma09} (we
include this example to illustrate the different uses of the low-rank
matrix model, but also note that it is quite different than the
problem addressed in our paper).

\subsection{Prior literature}
There has recently been an explosion of literature regarding low-rank
matrix recovery, with special attention given to the matrix completion
subproblem (as made famous by the million dollar Netflix Prize).
Several different algorithms have been proposed, with many drawing
their roots from standard compressive sensing techniques
\cite{Recht07, CR08, CP09, MontanariNoisy, Ma08, SVT, Dai2009,
  Lee2009, Jain09}.  For example, nuclear-norm minimization is highly
analogous to $\ell_1$ minimization (as a convex relaxation to an
intractable problem), and the algorithms analyzed in this paper are
analogous to the Dantzig Selector and the LASSO.

The theory regarding the power of nuclear-norm minimization in
recovering low-rank matrices from undersampled measurements began
with a paper by Recht et al. \cite{Recht07}, which sought to bridge
compressive-sensing with low-rank matrix recovery via the RIP (to be
defined in Section \ref{sec:RIP}).  Subsequently, several papers
specialized the theory of nuclear-norm minimization to the matrix
completion problem \cite{CR08, CT09, Ma09, CP09, GrossGeneral09} which
turns out to be `RIPless'; this literature is motivated by very clear
applications such as recommender systems and network localizations,
and has required very sophisticated mathematical techniques.

With the recent increase in attention given to the low-rank matrix
model, which the authors surmise is due to the spring of new theory,
new applications are being quickly discovered that deviate from the
matrix completion setup (such as quantum state tomography
\cite{Gross09}), and could benefit from a different analysis.  Our
paper returns to measurement ensembles obeying the RIP as in
\cite{Recht07}, which are of a different nature than those involved in
matrix completion.  As in compressive sensing, the only known
measurement ensembles which provably satisfy the RIP at a nearly
minimal sampling rate are random (such as the Gaussian measurement
ensemble in Section \ref{sec:RIP}) Having said this, two comments are
in order.  First, our results provide an absolute benchmark of what is
achievable, thus allowing direct comparisons with other methods and
other sampling operators $\cA$. For instance, one can quantify how far
the error bounds for the RIPless matrix completion are from what is
then known to be essentially unimprovable. Second, since our results
imply that the restricted isometry property {\em alone} guarantees a
near-optimal accuracy, we hope that this will encourage more
applications with random ensembles, and also encourage researchers to establish
whether or not their measurements obey this desirable
property. Finally, we hope that our analysis offers insights for
applications with nonrandom measurement ensembles.

\subsection{Problem setup}
We observe data $y$
from the model
\begin{equation}
  \label{eq:model}
  y = {\cal A}(M) + z, 
\end{equation}
where $M$ is an unknown $n_1 \times n_2$ matrix, ${\cal A} : \R^{n_1
  \times n_2} \goto \R^m$ is a linear mapping, and $z$ is an
$m$-dimensional noise term.  The synthesized versions of our error
bounds assume that $z$ is a Gaussian vector with i.i.d.~${\cal
  N}(0,\sigma^2)$ entries.  The goal is to recover a good
approximation of $M$ while requiring as few measurements as possible.

We pause to demonstrate the form of $\cA(X)$ explicitly: the $i$th
entry of $\cA(X)$ is $[\cA(X)]_i = \<A_i, X\>$ for some sequence of
matrices $\{A_i\}$ and with the standard inner product $\<A,X\>
= \trace(A^* X)$.  Each $A_i$ can be likened to a row of a
compressive sensing matrix, and in fact it can aid the intuition to
think of $\cA$ as a large matrix, i.e.~one could write $\cA(X)$ as
\begin{equation}
\label{eq:bigA}
\cA(X) = \begin{bmatrix}
  \text{vec}(A_1)\\
  \text{vec}(A_2)\\
  \vdots\\
  \text{vec}(A_m)\\
\end{bmatrix}
\, \text{vec}(X), 
\end{equation}
where vec$(X)$ is a long vector obtained by stacking the columns of
$X$. In the common matrix completion problem, each $A_i$ is of the
form $e_k e_j^*$ so that the $i$th component of $\cA(X)$ is of the
form $\<e_k e_j^*, M\> = e_k^* M e_j = M_{kj}$ for some $(j,k)$.

\subsection{Algorithms}
To recover $M$, we propose solving one of two
nuclear-norm-minimization based algorithms.  The first is an analogue
to the Dantzig Selector from compressive sensing \cite{DS}, defined as
follows:
\begin{equation}
  \label{eq:ds}
   \begin{array}{ll}
     \textrm{minimize}   & \quad \|X\|_{*}\\ 
     \textrm{subject to} & \quad \|{\cal A}^* (r)\| \le \lambda\\
     & \quad r = y - {\cal A}(X), 
  \end{array}
\end{equation}
where the optimal solution is our estimate $\hat{M}$, $\opnorm{\cdot}$
is the operator norm and $\nucnorm{\cdot}$ is its dual, i.e.~the
nuclear norm, and $\cA^*$ is the adjoint of $\cA$. We call this convex
program the \textit{matrix Dantzig selector}.

To pick a useful value for the parameter $\lambda$ in \eqref{eq:ds},
we stipulate that the `true' matrix $M$ should be feasible (this is a
necessary condition for our proofs). In other words, one should have
$\opnorm{\cA^*(z)} \leq \lambda$; Section \ref{sec:minimax} provides
further intuition about this requirement.  In the case of Gaussian
noise, this corresponds to $\lambda = C n \sigma$ for some numerical
constant $C$ as in the following lemma.
\begin{lemma}
\label{lem:opNormz}
Suppose $z$ is a Gaussian vector with i.i.d.~$\mathcal{N}(0, \sigma^2)$ entries
and let $n = \max(n_1,n_2)$. Then if $C_0 > 4\sqrt{(1+\delta_1)\log
  12}$
\begin{equation}
  \label{eq:Atz}
  \|{\cal A}^*(z)\| \le C_0 \sqrt{n} \sigma, 
\end{equation}
with probability at least $1- 2e^{-cn}$ for a fixed numerical constant
$c > 0$.
\end{lemma}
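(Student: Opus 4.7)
The plan is to bound $\|\mathcal{A}^*(z)\|$ by its variational form and combine a Gaussian concentration bound with an $\varepsilon$-net argument that exploits the RIP at rank one. Since the operator norm is dual to the nuclear norm on rank-one matrices,
\[
\|\mathcal{A}^*(z)\| = \sup_{\|u\|_{\ell_2}=\|v\|_{\ell_2}=1} \langle \mathcal{A}^*(z),\, uv^*\rangle = \sup_{u,v} \langle z,\, \mathcal{A}(uv^*)\rangle.
\]
For any fixed unit vectors $u,v$, the matrix $uv^*$ has rank one and unit Frobenius norm, so by the rank-$1$ RIP we have $\|\mathcal{A}(uv^*)\|_{\ell_2}^2 \le 1+\delta_1$. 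Consequently, $\langle z,\mathcal{A}(uv^*)\rangle$ is a centered Gaussian variable with standard deviation at most $\sigma\sqrt{1+\delta_1}$, which gives the pointwise tail bound $\mathbb{P}(|\langle z,\mathcal{A}(uv^*)\rangle|\ge t\sigma\sqrt{1+\delta_1}) \le 2e^{-t^2/2}$.

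Next I discretize. Let $N_1,N_2$ be $\varepsilon$-nets (in the $\ell_2$ norm) of the unit spheres of $\mathbb{R}^{n_1}$ and $\mathbb{R}^{n_2}$; standard volumetric arguments give $|N_i|\le (1+2/\varepsilon)^{n_i}$. Given arbitrary unit vectors $u,v$, pick $u'\in N_1$ and $v'\in N_2$ within distance $\varepsilon$. The difference $uv^*-u'v'^* = u(v-v')^* + (u-u')v'^*$ is a sum of two rank-one matrices each of Frobenius norm at most $\varepsilon$; writing each piece as a scalar multiple of a unit-Frobenius rank-one matrix and re-using the supremum bound, one obtains
\[
\|\mathcal{A}^*(z)\| \le 2\varepsilon\,\|\mathcal{A}^*(z)\| + \max_{u'\in N_1,\, v'\in N_2}\,|\langle z,\mathcal{A}(u'v'^*)\rangle|,
\]
hence $\|\mathcal{A}^*(z)\| \le (1-2\varepsilon)^{-1}\max_{N_1\times N_2}|\langle z,\mathcal{A}(u'v'^*)\rangle|$. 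A union bound over $|N_1||N_2|\le (1+2/\varepsilon)^{n_1+n_2}\le (1+2/\varepsilon)^{2n}$ pairs, combined with the pointwise Gaussian tail with $t= C_1\sqrt{n}$, yields the desired control provided $C_1^2/2 > 2\log(1+2/\varepsilon)$; choosing $\varepsilon$ so that the combined factor $(1-2\varepsilon)^{-1}\cdot 2\sqrt{\log(1+2/\varepsilon)}$ is at most $4\sqrt{\log 12}$ gives the stated constant $C_0 > 4\sqrt{(1+\delta_1)\log 12}$, and the failure probability becomes $2e^{-cn}$ with $c>0$ absolute.

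I expect the only mildly delicate step to be the net reduction, because $uv^*-u'v'^*$ has rank two (not one), so one cannot directly invoke the rank-$1$ RIP on the approximation error. The fix, as above, is to split that difference into two rank-one pieces and feed each back into the same supremum, producing a self-improving inequality that absorbs the $2\varepsilon$ factor into the left-hand side. Everything else—Gaussian concentration, the net cardinality estimate, and the final union bound—is routine. I would record the proof in this order: (i) variational identity and RIP-based variance bound; (ii) construction of the product net and the approximation identity for $uv^*-u'v'^*$; (iii) self-bounding inequality; (iv) Gaussian tail plus union bound; (v) choice of $\varepsilon$ to match the numerical constant in the statement.
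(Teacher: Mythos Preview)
Your proposal is correct and follows essentially the same approach as the paper: the paper also writes $\|\mathcal{A}^*(z)\|$ variationally as $\sup_{w,v}\langle w,\mathcal{A}^*(z)v\rangle$, covers each unit sphere by a $1/4$-net of cardinality at most $12^n$, uses the bilinear decomposition $\langle w,Zv\rangle = \langle w-w_0,Zv\rangle + \langle w_0,Z(v-v_0)\rangle + \langle w_0,Zv_0\rangle$ to obtain $\|Z\|\le 2\max_{w_0,v_0}\langle w_0,Zv_0\rangle$, and finishes with the rank-one RIP bound on $\|\mathcal{A}(w_0v_0^*)\|_{\ell_2}$ plus a Gaussian tail and union bound. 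The only cosmetic differences are that the paper fixes $\varepsilon=1/4$ from the outset (with the $(3/\varepsilon)^n$ covering bound, yielding exactly the $\log 12$ in the constant) and phrases the self-bounding step via the bilinear form rather than by splitting $uv^*-u'v'^*$ into two rank-one pieces; these are equivalent.
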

This lemma is proved in Section \ref{sec:proofs} using a standard
covering argument. The scalar $\delta_1$ is the isometry constant at
$\rank 1$, as defined in Section \ref{sec:RIP}, but suffice for now
that it is a very small constant bounded by $\sqrt{2} - 1$ (with high probability)
under the assumptions of all of our theorems.

The optimization program \eqref{eq:ds} may be formulated as a
semidefinite program (SDP) and can thus be solved by any of the
standard SDP solvers. To see this, we first recall that the nuclear
norm admits an SDP characterization since $\|X\|_*$ is the optimal
value of the SDP
\[
  \begin{array}{ll}
    \text{minimize}   & \quad \bigl(\trace({W}_1) + \trace({W}_2)\bigr)/2   \\\\
 
    \text{subject to} & \quad \begin{bmatrix} {W}_1 & {X}\\
      {X}^* & {W}_2
 \end{bmatrix} \succeq 0
\end{array}
 \]
 with optimization variables $X, W_1, W_2 \in \R^{n\times n}$. Second,
 the constraint $\|{\cal A}^*(r)\| \le \lambda$ is an SDP constraint
 since it can be expressed as the linear matrix inequality (LMI)
\[
 \begin{bmatrix}  \lambda I_n & {\cal A}^*(r)\\
      [{\cal A}^*(r)]^* & \lambda I_n 
 \end{bmatrix} \succeq 0. 
\]
This shows that \eqref{eq:ds} can be formulated as the SDP
\[
  \begin{array}{ll}
    \text{minimize}   & \quad \bigl(\trace({W}_1) + \trace({W}_2)\bigr)/2 \vspace{2mm}\\
    \text{subject to} & \quad \begin{bmatrix} {W}_1 & {X} & 0 & 0\\
      {X}^* & {W}_2 & 0 & 0\\
      0 & 0 & \lambda I_n & {\cal A}^*(r)\\
      0 & 0 &  [{\cal A}^*(r)]^* & \lambda I_n 
 \end{bmatrix} \succeq 0 \vspace{2mm}\\
& \quad r = y - {\cal A}(X), 
\end{array}
\]
with optimization variables $X, W_1, W_2 \in \R^{n\times n}$.

However, a few algorithms have recently been developed to solve
similar nuclear-norm minimization problems without using
interior-point methods which work extremely efficiently in practice
\cite{SVT, Ma08}.  The nuclear-norm minimization problem solved using
fixed-point continuation in \cite{Ma08} is an analogue to the LASSO,
and is defined as follows:
\begin{equation}
\label{eq:lasso}
\text{minimize}  \quad \frac{1}{2} \twonorm{\cA(X) - y}^2 + \mu \nucnorm{X}.
\end{equation}
We call this convex program the \textit{matrix Lasso} and it is the
second convex program whose theoretical properties are analyzed in
this paper.

\subsection{Organization of the paper}
The results in this paper mostly concern random measurements and
random noise and so they hold with high probability.  In Section
\ref{sec:RIP}, we show that certain classes of random measurements
satisfy the RIP when only sampling a constant number of measurements
per degree of freedom.  In Section \ref{sec:minimax} we present the
simplest of our error bounds, demonstrating that when the RIP holds,
the solution to \eqref{eq:ds} is within a constant of the minimax
risk.  This error bound is refined in Section \ref{sec:oracle} to
provide a more adaptive error that holds improvements when the
singular values of $M$ decay below the noise level.  It is shown that
this error bound is within a constant of the expected value of a
certain `oracle' error bound.  In Section \ref{sec:instOpt}, we present
an error bound handling the case when $M$ has full rank but is well
approximated by a low-rank matrix. Section \ref{sec:proofs} contains
the proofs and we finish with some concluding remarks in Section
\ref{sec:conclusion}.

\subsection{Notation}
We review all notation used in this paper in order to ease
readability.  We assume $M \in \R^{n_1 \times n_2}$ and let $n =
\max(n_1, n_2)$. A variety of norms are used throughout this paper:
$\nucnorm{X}$ is the nuclear norm (the sum of the singular values);
$\opnorm{X}$ is the operator norm of $X$ (the top singular value);
$\fronorm{X}$ is the Frobenius norm (the $\ell_2$-norm of the vector
of singular values).  The matrix $X^*$ is the adjoint of $X$, and for
the linear operator $\cA :\R^{n_1 \times n_2} \rightarrow \R^m$,
$\cA^*: \R^m \rightarrow \R^{n_1 \times n_2}$ is the adjoint operator.
Specifically, if $[\cA(X)]_i = \<A_i, X\>$ for all matrices $X \in
\R^{n_1 \times n_2}$, then
\[\cA^*(q) = \sum_{i=1}^m q_i A_i\]
for all vectors $q \in \R^m$.


\section{Main Results}

\subsection{Matrix RIP}
\label{sec:RIP}
The matrix version of the RIP is an integral tool in proving our
theoretical results and we begin by defining the RIP in this setting
and describing measurement ensembles that satisfy it.  To characterize
the RIP, we introduce the isometry constants of a linear map $\cA$.
\begin{definition} For each integer $r = 1, 2, \ldots, n$, the
  isometry constant $\delta_r$ of ${\cal A}$ is the smallest quantity
  such that
  \begin{equation} \label{eq:rip}
  (1-\delta_r) \|X\|^2_F \le \twonorm{\cA(X)}^2 \le (1+\delta_r) \|X\|_F^2 
 \end{equation} 
holds for all matrices of rank at most $r$. 
\end{definition}
We say that $\cA$ satisfies the RIP at rank $r$ if $\delta_{r}$ is bounded by a sufficiently small constant between 0 and 1, the value of which will become apparent in further sections (see e.g. Theorem \ref{teo:DS1}).  

Which linear maps $\cA$ satisfy the RIP?  As a quintessential example,
we introduce the Gaussian measurement ensemble.
\begin{definition}
  $\cA$ is a Gaussian measurement ensemble if each `row' $A_i$, $1 \le
  i \le m$, contains i.i.d.~$\mathcal{N}(0,1/m)$ entries (and the $A_i$'s are
  independent from each other).
\end{definition}
This is of course highly analogous to the Gaussian random matrices in
compressive sensing.  Our first result is that Gaussian measurement
ensembles, along with many other random measurement ensembles, satisfy
the RIP when $m \geq C \, nr$ (with high probability) for some
constant $C > 0$.
\begin{theorem} \label{teo:RIP} Fix $0 \leq \delta < 1$ and let $\cA$
  be a random measurement ensemble obeying the following condition:
  for any given $X \in \R^{n_1\times n_2}$ and any fixed $0<t<1$,
\begin{equation}
\label{eq:concentration}
P(|\twonorm{\cA(X)}^2 - \fronorm{X}^2|  > t \fronorm{X}^2) \leq C \exp(-c m)
\end{equation}
for fixed constants $C,c>0$ (which may depend on $t$).  Then if $m
\geq D n r$, $\cA$ satisfies the RIP with isometry constant $\delta_r
\le \delta$ with probability exceeding $1 - C e^{-d m}$ for fixed
constants $D, d > 0$.
\end{theorem}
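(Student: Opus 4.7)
The plan is the classical $\epsilon$-net argument. By homogeneity of \eqref{eq:rip}, it suffices to prove that with high probability
\[
\sup_{X \in S_r} \bigl|\twonorm{\cA(X)}^2 - 1\bigr| \leq \delta, \qquad S_r := \{X \in \R^{n_1\times n_2} : \rank(X)\leq r,\ \fronorm{X} = 1\}.
\]
Since \eqref{eq:concentration} is only a pointwise tail bound, I would (i) build a fine $\epsilon$-net of $S_r$ in the Frobenius metric, (ii) apply \eqref{eq:concentration} to every point of the net and union-bound, and (iii) transfer the resulting control on the net to all of $S_r$.

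For step (i), parametrize $X \in S_r$ through its SVD $X = U \Sigma V^*$, with $U \in \R^{n_1 \times r}$, $V \in \R^{n_2 \times r}$ having orthonormal columns and $\Sigma$ nonnegative diagonal with $\fronorm{\Sigma} = 1$, and cover each factor separately. Standard volumetric bounds give $(\epsilon/3)$-nets of sizes $(9/\epsilon)^{n_1 r}$, $(9/\epsilon)^{n_2 r}$, $(9/\epsilon)^r$ for $U$, $V$, $\Sigma$ respectively; a Lipschitz product argument (this is essentially Lemma 3.1 of \cite{Recht07}) then yields an $\epsilon$-net $\bar S_r \subset S_r$ with
\[
|\bar S_r| \leq (9/\epsilon)^{(n_1 + n_2 + 1)r}.
\]
For step (ii), apply \eqref{eq:concentration} with $t = \delta/2$ to each $\bar X \in \bar S_r$ and union bound:
\[
\P\bigl(\,\forall\, \bar X \in \bar S_r:\ |\twonorm{\cA(\bar X)}^2 - 1| \leq \delta/2\,\bigr) \;\geq\; 1 - C\, (9/\epsilon)^{(n_1+n_2+1)r} \exp(-cm).
\]
Choosing $m \geq D n r$ with $D$ large enough (depending on $\delta$, $\epsilon$, $c$) makes this probability at least $1 - C\exp(-dm)$ for some $d>0$.

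For step (iii), given $X \in S_r$ pick $\bar X \in \bar S_r$ with $\fronorm{X - \bar X} \leq \epsilon$ and set $A := \sup_{X \in S_r} \twonorm{\cA(X)}$. Expanding $\twonorm{\cA(X)}^2$ around $\cA(\bar X)$ and invoking the net bound yields
\[
\twonorm{\cA(X)}^2 \;\leq\; 1 + \delta/2 + 2\,\twonorm{\cA(\bar X)}\,\twonorm{\cA(X - \bar X)} + \twonorm{\cA(X - \bar X)}^2.
\]
The only subtlety is that $X - \bar X$ may have rank as large as $2r$; however, splitting it along its SVD into two rank-$r$ matrices, each of Frobenius norm at most $\epsilon$, gives $\twonorm{\cA(X - \bar X)} \leq 2\epsilon A$. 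The display above then reads $A^2 \leq 1 + \delta/2 + O(\epsilon) A + O(\epsilon^2) A^2$, which for $\epsilon$ small enough forces $A^2 \leq 1 + \delta$; the lower isometry bound is obtained identically.

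The main obstacle, and where care is required, is precisely this bootstrap in step (iii): because the difference of two rank-$r$ matrices has rank up to $2r$, a naive ``net + Lipschitz'' comparison does not close within the rank-$r$ class, and one must split $X - \bar X$ back into rank-$r$ pieces so that the supremum $A$ appears on both sides of the inequality. Everything else is routine: once $\epsilon$ is fixed, the union bound readily absorbs the covering factor $(9/\epsilon)^{(n_1+n_2+1)r}$ into the exponential $\exp(-cm)$ as soon as $m \geq Dnr$ with $D$ sufficiently large.
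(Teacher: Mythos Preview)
Your proposal is correct and follows essentially the same approach as the paper: an $\epsilon$-net of $S_r$ built from the SVD factorization (this is the paper's Lemma~\ref{teo:covering}, with the same cardinality bound $(9/\epsilon)^{(n_1+n_2+1)r}$), a union bound via \eqref{eq:concentration}, and then a bootstrap in which the rank-$2r$ difference $X-\bar X$ is split into two rank-$r$ pieces so that the supremum over $S_r$ appears on both sides. The only tactical difference is that the paper runs the bootstrap on $\twonorm{\cA(X)}$ rather than on its square, setting $\kappa_r=\sup_{X\in S_r}\twonorm{\cA(X)}$ and obtaining the clean linear inequality $\kappa_r\le \kappa_r\,\delta/4 + 1+\delta/2$ (with $\epsilon=\delta/(4\sqrt 2)$ and the sharper constant $\sqrt 2\epsilon$ from orthogonality of the two pieces), which is marginally tidier than solving your quadratic $A^2\le 1+\delta/2+O(\epsilon)A+O(\epsilon^2)A^2$; but the substance is identical.
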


The many unspecified constants involved in the presentation of Theorem
\ref{teo:RIP} are meant to allow for general use with many random
measurement ensembles.  However, to make the presentation more
concrete we describe the constants involved in the concentration bound
\eqref{eq:concentration} for a few special random measurement
ensembles.  If $\cA$ is a Gaussian random measurement ensemble,
$\|\cA(X)\|_{\ell_2}^2$ is distributed as $m^{-1} \|X\|_F^2$ times a
chi-squared random variable with $m$ degrees of freedom and
\eqref{eq:concentration} follows from standard concentration
inequalities \cite{Laurent00, Recht07}.  Specifically, we have
\begin{equation}
\label{eq:setConcentration}
P(|\twonorm{\cA(X)}^2 - \fronorm{X}^2|  > t \fronorm{X}^2) \leq 2 \exp\left(-\frac{m}{2}(t^2/2 - t^3/3)\right).
\end{equation}
Similarly, $\cA$ satisfies equation \eqref{eq:setConcentration} in the
case when each entry of each `row' $A_i$ has i.i.d.~entries that are
equally likely to take the value $1/\sqrt{m}$ or $-1/\sqrt{m}$, or if
$\cA$ is a random projection \cite{Achlioptas, Recht07}.  Further,
$\cA$ satisfies \eqref{eq:concentration} if the `rows' $A_i$ contain
sub-Gaussian entries (properly normalized) \cite{Vershynin00},
although in this case the constants involved depend on the parameters
of the sub-Gaussian entries.

In order to ascertain the strength of Theorem \ref{teo:RIP}, note that
the number of degrees of freedom of an $n_1 \times n_2$ matrix of rank
$r$ is equal to $r(n_1 + n_2 - r)$.\footnote{This can be seen by
  counting the number of equations and unknowns in the singular value
  decomposition.}  Thus, one may expect that if $m < r(n_1 + n_2 -r)$,
there should be a rank-$r$ matrix in the null space of $\cA$ leading
to a failure to achieve the lower bound in \eqref{eq:rip}.  In order
to make this intuition rigorous (to within a constant) assume without
loss of generality that $n_2 \geq n_1$, and observe that the set of
rank-$r$ matrices contains all those matrices restricted to have
nonzero entries only in the first $r$ rows.  This is an $n\times r$
dimensional vector space and thus we must have $m \geq nr$ or
otherwise there will be a rank-$r$ matrix in the null space of $\cA$
regardless of what measurements are used. (This is a similar alternative to the null-space argument posed in the introduction.)

Theorem \ref{teo:RIP} is inspired by a similar theorem in
\cite{Recht07}[Theorem 4.2] and refines this result in two
ways. First, it shows that one only needs a constant number of
measurements per degree of freedom of the underlying rank-$r$ matrix
in order to obtain the RIP at rank $r$ (which improves on the result
in \cite{Recht07} by a factor of $\log n$ and also achieves the
theoretical lower bound to within a constant).  Second, it shows that
one must only require a single concentration bound on $\cA$, removing
another assumption required in \cite{Recht07}.  A possible third
benefit is that the proof follows simply and quickly from a
specialized covering argument. The novelty is in the method used to
cover low-rank matrices.

\subsection{The matrix Dantzig selector and the matrix Lasso are nearly minimax}
\label{sec:minimax}
In this section, we present our first and simplest error bound, which
only requires that $\cA$ satisfies the RIP.
\begin{theorem} 
\label{teo:DS1}
Assume that $\rank(M) \leq r$ and let $\hat{M}_{DS}$ be the solution
to the matrix Dantzig selector \eqref{eq:ds} and $\hat{M}_L$ be the
solution to the matrix Lasso \eqref{eq:lasso}.  If $\delta_{4r} <
\sqrt{2}-1$ and $\|{\cal A}^*(z)\| \le \lambda$ then
  \begin{equation}
    \label{eq:ds1}
    \|\hat{M}_{DS} - M\|^2_F \le C_0 \, r \lambda^2, 
  \end{equation}
  and if $\delta_{4r} < (3\sqrt{2} - 1)/17$ and $\opnorm{\cA^*(z)} \le
  \mu/2$, then
  \begin{equation}
  	\label{eq:lasso1}
  	\fronorm{\hat{M}_L - M}^2 \leq C_1 \, r \mu^2;
  \end{equation}
  above, $C_0$ and $C_1$ are small constants depending only on the
  isometry constant $\delta_{4r}$. In particular, if $z$ is a Gaussian
  error and $\hat M$ is either $\hat M_{DS}$ with $\lambda = 8n\sigma$, 
  or $\hat M_{L}$ with $\mu = 16n\sigma$, we have
 \begin{equation}
   \label{eq:dslasso}
    \|\hat M - M\|^2_F \le C'_0 \, n r \sigma^2
 \end{equation}
 with probability at least $1 - 2e^{-cn}$ for a constant $C'_0$
 (depending only on $\delta_{4r}$).
\end{theorem}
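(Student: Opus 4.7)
\medskip
\noindent\textbf{Proof proposal for Theorem \ref{teo:DS1}.}

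The plan is to mimic the classical Dantzig-selector/Lasso analysis of \cite{DS,Bickel07,CT05} but with the sparsity/support structure replaced by rank/tangent-space structure. Write $H := \hat M - M$ for the error, and let $M = U\Sigma V^*$ be a reduced SVD with $U \in \R^{n_1\times r}$, $V\in \R^{n_2\times r}$. Introduce the tangent space $T := \{UX^* + YV^* : X\in\R^{n_2\times r}, Y\in\R^{n_1\times r}\}$ used in the introduction, and decompose $H = H_T + H_{T^\perp}$, where $H_T := \PT(H)$ has rank at most $2r$ and $H_{T^\perp}$ has column space orthogonal to $U$ and row space orthogonal to $V$. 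The decisive algebraic fact I will use is that whenever $N$ has column/row spaces orthogonal to those of $M$, one has $\|M+N\|_* = \|M\|_* + \|N\|_*$.

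First I would establish the ``cone condition'' $\|H_{T^\perp}\|_* \le \|H_T\|_*$. For the Dantzig selector this comes from feasibility of $M$ (ensured by the hypothesis $\|\cA^*(z)\|\le\lambda$) combined with optimality of $\hat M$:
\begin{equation*}
\|M\|_* \ge \|\hat M\|_* = \|M + H_{T^\perp} + H_T\|_* \ge \|M+H_{T^\perp}\|_* - \|H_T\|_* = \|M\|_* + \|H_{T^\perp}\|_* - \|H_T\|_*.
\end{equation*}
For the matrix Lasso the same inequality is obtained from the KKT conditions: optimality of $\hat M$ implies $\cA^*(\cA(\hat M) - y) \in -\mu\, \partial \|\hat M\|_*$, so $\|\cA^*(\cA H - z)\|\le \mu$, and the subgradient characterization of $\|\cdot\|_*$ applied to $M$ (modified slightly because $\hat M\ne M$) yields an inequality of the form $\|H_{T^\perp}\|_* \le 3\|H_T\|_* + \text{(small)}$, which is the Lasso counterpart that explains the stricter constant $(3\sqrt2-1)/17$.

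Next I would control $\|H\|_F$ via RIP. Order the singular values of $H_{T^\perp}$ in decreasing order and split $H_{T^\perp} = \sum_{j\ge 1} H_j$ into rank-$2r$ ``blocks''; then $\|H_{j+1}\|_F \le \|H_j\|_*/\sqrt{2r}$, so by telescoping $\sum_{j\ge 2} \|H_j\|_F \le \|H_{T^\perp}\|_*/\sqrt{2r} \le \|H_T\|_*/\sqrt{2r} \le \|H_T\|_F$, using the cone inequality and $\|H_T\|_* \le \sqrt{2r}\|H_T\|_F$. Set $H_0 := H_T + H_1$, which has rank at most $4r$. Using RIP at rank $4r$:
\begin{equation*}
(1-\delta_{4r})\|H_0\|_F^2 \le \|\cA(H_0)\|_{\ell_2}^2 = \langle \cA(H_0),\cA(H)\rangle - \sum_{j\ge 2}\langle \cA(H_0),\cA(H_j)\rangle.
\end{equation*}
The first term I would bound by $|\langle H_0,\cA^*\cA(H)\rangle| \le \|H_0\|_*\,\|\cA^*\cA(H)\|$; for the Dantzig selector the triangle inequality gives $\|\cA^*\cA(H)\| \le \|\cA^*(\cA\hat M - y)\| + \|\cA^*(z)\| \le 2\lambda$, and for the Lasso the KKT bound similarly controls this term by $\mathcal{O}(\mu)$. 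The cross terms $\langle \cA(H_0),\cA(H_j)\rangle$ are bounded by the usual polarization/RIP trick (since $H_T\perp H_j$ and $H_1\perp H_j$ for $j\ge 2$ in a Frobenius sense, these inner products are at most $\delta_{4r}\|H_0\|_F\|H_j\|_F$). Combining and using $\|H_0\|_* \le \sqrt{4r}\,\|H_0\|_F$ produces a bound of the form $\|H_0\|_F \le C\sqrt{r}\lambda$, which with $\|H\|_F^2 \le 2(\|H_0\|_F^2 + (\sum_{j\ge 2}\|H_j\|_F)^2)$ yields \eqref{eq:ds1}--\eqref{eq:lasso1}. The numerical constants $\sqrt 2-1$ and $(3\sqrt2-1)/17$ drop out of requiring the coefficient of $\|H_0\|_F^2$ on the right to remain strictly positive.

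The main obstacle, and the step I would be most careful with, is the Lasso cone inequality: unlike the Dantzig selector, we do not have $\|\hat M\|_* \le \|M\|_*$ for free, and one must trade off the quadratic data-fidelity term against the nuclear-norm penalty while respecting that the subgradient of $\|\cdot\|_*$ is a set-valued object depending on the SVD of $\hat M$, not of $M$. Once the cone inequality is in place, the RIP/block-decomposition machinery is essentially the same for both estimators. Finally, the Gaussian deviation bound \eqref{eq:dslasso} follows by plugging $\lambda = 8n\sigma$ (resp.\ $\mu = 16n\sigma$) into \eqref{eq:ds1}--\eqref{eq:lasso1} and invoking Lemma \ref{lem:opNormz} to guarantee $\|\cA^*(z)\|\le\lambda$ with probability $1-2e^{-cn}$.
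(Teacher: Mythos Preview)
Your proposal is correct and follows essentially the same route as the paper. The paper packages the Dantzig-selector argument as a slightly more general lemma (Lemma~\ref{teo:nucBound}, allowing $M$ to be only approximately rank-$r$), but for Theorem~\ref{teo:DS1} itself the proof is exactly the tangent-space decomposition $H = H_0 + H_c$ (your $H_T + H_{T^\perp}$), the cone inequality $\|H_c\|_* \le \|H_0\|_*$, rank-$2r$ block splitting of $H_c$, and RIP at rank $4r$ applied to $H_0 + H_1$.

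Two small points worth tightening. First, the cross-term bound: your $H_0 := H_T + H_1$ has rank up to $4r$, so applying the polarization lemma directly to $\langle \cA(H_0), \cA(H_j)\rangle$ would only give $\delta_{6r}$. The paper obtains $\delta_{4r}$ by bounding $\langle \cA(H_T), \cA(H_j)\rangle$ and $\langle \cA(H_1), \cA(H_j)\rangle$ separately (each pair has total rank $\le 4r$) and then using $\|H_T\|_F + \|H_1\|_F \le \sqrt{2}\,\|H_T+H_1\|_F$. That $\sqrt{2}$ is exactly what produces the threshold $\delta_{4r} < \sqrt{2}-1$, so you should make it explicit. Second, for the Lasso cone inequality the paper does not argue via the subgradient of $\|\cdot\|_*$; it simply compares objective values,
\[
\tfrac12\|\cA(\hat M)-y\|_{\ell_2}^2 + \mu\|\hat M\|_* \le \tfrac12\|z\|_{\ell_2}^2 + \mu\|M\|_*,
\]
expands, and uses $\langle H, \cA^*(z)\rangle \le \|\cA^*(z)\|\,\|H\|_*$ to get $\|\hat M\|_* \le \|M\|_* + c_0\|H\|_*$ whenever $\|\cA^*(z)\|\le c_0\mu$. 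With $c_0 = 1/2$ this yields $\|H_{T^\perp}\|_* \le 3\|H_T\|_*$, and the factor $3$ together with the $\sqrt{2}$ above is what produces the constant $(3\sqrt{2}-1)/17$. Your subgradient-based sketch would also work, but the objective-comparison route is cleaner and sidesteps the issue you flagged about the subdifferential being located at $\hat M$ rather than $M$.
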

Note that \eqref{eq:dslasso} follows from \eqref{eq:ds1} and
\eqref{eq:lasso1} simply by plugging in $\lambda, \mu/2 = 8n\sigma$
into Lemma \ref{lem:opNormz}. In a nutshell, the error is proportional
to the number of degrees of freedom times the noise level.

An important point is that one may expect the error to be reduced when
further measurements are taken i.e.~one may expect the error to be
inversely proportional to $m$.  In fact, this is the case for the
Gaussian measurement ensemble, but this extra factor is absorbed into
the definition in order to normalize the measurements so that they
satisfy the RIP.  If instead, each row `$A_i$' in the Gaussian
measurement ensemble is defined to have i.i.d.~standard normal
entries, then by a simple rescaling argument (apply Theorem
\ref{teo:DS1} to $y/\sqrt{m}$), the error bound reads
\[\|\hat M - M\|^2_F \le C'_0 \, n r \sigma^2/m.\] A second remark is
that exploiting the low-rank structure helps to denoise.  For example,
if we measured every entry of $M$ (a measurement ensemble with
isometry constant $\delta_r = 0$), but with each measurement corrupted
by a $\mathcal{N}(0, \sigma^2)$ noise term, then taking the
measurements as they are as the estimate of $M$ would lead to an
expected error equal to
\[
\E \fronorm{\hat{M} - M}^2 = n^2 \sigma^2. 
\]
Nuclear-norm minimization\footnote{Of course if one sees all of the
  entries of the matrix plus noise, nuclear-norm minimization is
  unnecessary, and one can achieve minimax error bounds by truncating
  the singular values.} reduces this error by a factor of about $n/r$.

The strength of Theorem \ref{teo:DS1} is that the error bound
\eqref{eq:dslasso} is nearly optimal in the sense that no estimator
can do essentially better without further assumptions, as seen by
lower-bounding the expected minimax error.
\begin{theorem}
\label{teo:minimax}
  If $z$ is a Gaussian error, then any estimator $\hat M(y)$ obeys
  \begin{equation}
    \label{eq:minimax}
    \sup_{M : \rank(M) \le r} \, \E \|\hat M(y) - M\|_F^2 \ge \frac{1}{1+\delta_r} \, n r \sigma^2.    
  \end{equation}
  In other words, the minimax error over the class of matrices of rank
  at most $r$ is lower bounded by about $n r \sigma^2$.
\end{theorem}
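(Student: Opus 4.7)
The plan is to restrict to an $nr$-dimensional linear subspace of rank-$\le r$ matrices, reduce the problem to a standard linear Gaussian model on that subspace, and then apply a Bayesian lower bound on the minimax risk.

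Reduction to a linear subspace. Without loss of generality assume $n = n_2 \geq n_1$. Let $V \subset \R^{n_1\times n_2}$ be the $nr$-dimensional subspace of matrices whose nonzero entries lie in a fixed choice of $r$ rows; every $M \in V$ satisfies $\rank(M) \le r$. Fix an orthonormal basis $\{E_j\}_{j=1}^{nr}$ of $V$ (the relevant standard-basis matrices $e_a e_b^*$ work), and parametrize $M(\theta) = \sum_{j=1}^{nr} \theta_j E_j$ for $\theta \in \R^{nr}$, so that $\|M(\theta)\|_F = \|\theta\|_{\ell_2}$. Define $\tilde A \in \R^{m\times nr}$ by $\tilde A_{ij} = \langle A_i, E_j\rangle$, so that $y = \cA(M(\theta)) + z$ becomes the Gaussian linear model $y = \tilde A\theta + z$, $z\sim \mathcal{N}(0,\sigma^2 I_m)$. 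The RIP at rank $r$ applied to matrices in $V$ gives
$$\|\tilde A\theta\|_{\ell_2}^2 \;=\; \|\cA(M(\theta))\|_{\ell_2}^2 \;\le\; (1+\delta_r)\|\theta\|_{\ell_2}^2,$$
so $\tilde A^*\tilde A \preceq (1+\delta_r)\,I_{nr}$.

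Bayes lower bound. For any estimator $\hat M$, equip $\theta$ with the prior $\pi = \mathcal{N}(0,\tau^2 I_{nr})$, which is supported entirely on $V \subset \{\rank \le r\}$. Then
$$\sup_{M:\rank(M)\le r} \E\|\hat M - M\|_F^2 \;\ge\; \int \E\|\hat M - M(\theta)\|_F^2\, \pi(d\theta) \;\ge\; \inf_{\hat M'} \int \E\|\hat M' - M(\theta)\|_F^2\, \pi(d\theta).$$
The right-most infimum is attained by the posterior mean of $\theta$ in the standard Gaussian location model, and equals $\trace(\Sigma_{\text{post}}) = \sigma^2\,\trace\bigl((\tilde A^*\tilde A + (\sigma^2/\tau^2)I_{nr})^{-1}\bigr)$. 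Using $\tilde A^*\tilde A + (\sigma^2/\tau^2)I \preceq \bigl((1+\delta_r)+\sigma^2/\tau^2\bigr)I_{nr}$ and the elementary bound $\trace(B^{-1}) \ge nr/\lambda_{\max}(B)$, we obtain
$$\sup_{M:\rank(M)\le r} \E\|\hat M - M\|_F^2 \;\ge\; \frac{nr\,\sigma^2}{(1+\delta_r) + \sigma^2/\tau^2}.$$
Letting $\tau \to \infty$ yields the desired bound $\tfrac{1}{1+\delta_r}\,nr\sigma^2$.

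There is no substantive obstacle here: the argument is essentially the Bayes/Cram\'er--Rao lower bound for the reduced Gaussian linear model. The only subtle point is choosing a prior whose support lies inside $\{M : \rank(M)\le r\}$ so that the Bayes risk genuinely lower-bounds the worst-case risk over the rank-$r$ class; the restriction to the subspace $V$ of first-$r$-row matrices accomplishes this while capturing the full $nr$-dimensional ``hard direction'' needed for the theorem.
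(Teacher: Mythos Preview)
Your proof is correct and follows essentially the same approach as the paper: restrict to an $nr$-dimensional linear subspace of rank-$\le r$ matrices, bound the eigenvalues of the induced design matrix by the RIP, and apply a Gaussian-prior Bayes risk lower bound with $\tau\to\infty$. The paper uses the subspace $\{UR : R\in\R^{r\times n}\}$ for a fixed orthonormal $U$ (of which your first-$r$-rows subspace is the special case $U=[e_1,\ldots,e_r]$) and packages the Bayes argument as a separate minimax-risk lemma $\sigma^2\trace((A^*A)^{-1})$ before applying it, whereas you go directly to the Bayes lower bound; these differences are purely cosmetic.
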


Before continuing, it may be helpful to analyze the solutions to the
matrix Dantzig selector and the matrix Lasso in a simple case in order
to understand the error bounds in Theorem \ref{teo:DS1}
intuitively, and also to understand our choice of $\lambda$ and $\mu$.
Suppose $\cA$ is the identity so that changing the notation a bit, the
model is $Y = M + Z$, where $Z$ is an $n \times n$ matrix with
i.i.d.~Gaussian entries.  We would like the unknown matrix $M$ to be a
feasible point, which requires that $\opnorm{Z} \leq \lambda$ (for example,
if $\opnorm{Z} > \lambda$, we already have problems when $M = 0$).  It
is well known that the top singular value of a square $n \times n$
Gaussian matrix, with per-entry variance $\sigma^2$, is concentrated
around $\sqrt{2n} \sigma$, and thus we require $\lambda \geq \sqrt{2n}
\sigma$ (this provides a slightly sharper bound than Lemma
\ref{lem:opNormz}). Let $T_\lambda(X)$ denote the singular value
thresholding operator given by 
\[
T_\lambda(X) = \sum_i \max(\sigma_i(X) - \lambda,0) u_i v_i^*, 
\]
where $X = \sum_i \sigma_i(X) u_i v_i^*$ is any singular value
decomposition. In this simple setting, the solution to \eqref{eq:ds}
and \eqref{eq:lasso} can be explicitly calculated, and for $\lambda =
\mu$ they are both equal to $T_\lambda(M + Z)$.  If $\lambda$ is too
large, then $T_\lambda(M + Z)$ becomes strongly biased towards zero,
and thus (loosely) $\lambda$ should be as small as possible while
still allowing $M$ to be feasible for the matrix Dantzig selector
\eqref{eq:ds}, leading to the choice $\lambda \approx \sqrt{2n}
\sigma$.

Further, in this simple case we can calculate the error bound in a few
lines. We have
\begin{align*}
\opnorm{\hat{M} - M} &= \opnorm{T_\lambda(Y) - Y + Z}\\
&\leq \opnorm{T_\lambda(Y) - Y} + \opnorm{Z}\\
&\leq 2\lambda
\end{align*}
assuming that $\lambda \geq \opnorm{Z}$.  Then 
\begin{align}
  \nonumber \fronorm{\hat{M} - M}^2 & \leq \opnorm{\hat{M} - M}^2\, 
  \rank(\hat{M} - M) \\
  & \leq 4 \lambda^2 \, \rank(\hat{M} - M).
\end{align}
Once again, assuming that $\lambda \geq \opnorm{Z}$, we have
$\rank(\hat{M} - M) \leq \rank(\hat{M}) + \rank(M) \leq 2r$.  Plugging
this in with $\lambda = C \sqrt{n} \sigma$ gives the error bound
\eqref{eq:dslasso}.

\subsection{Oracle inequalities}
\label{sec:oracle}

Showing that an estimator achieves the minimax risk is reassuring but
is sometimes not considered completely satisfactory. As is
frequently discussed in the literature, the minimax approach focuses
on the worst-case performance and it is quite reasonable to expect
that for matrices of general interest, better performances are
possible. In fact, a recent trend in statistical estimation is to
compare the performance of an estimator with what is achievable with
the help of an oracle that reveals extra information about the problem. A good match indicates an overall excellent
performance.

To develop an oracle bound, assume w.l.o.g.~that $n_2 \geq n_1$ so
that $n = n_2$, and consider the family of estimators defined as
follows: for each $n_1 \times r$ orthogonal matrix $U$, define
\begin{equation}
  \label{eq:MU}
  \hat{M}[U] = \arg \min \{\|y - {\cal A}(\hat M)\|_{\ell_2} : 
\hat M = UR \text{ for some } R\}.   
\end{equation}
In other words, we fix the column space (the linear space spanned by
the columns of the matrix $U$), and then find the matrix with that
column space which best fits the data. Knowing the true matrix $M$, an
oracle or a genie would then select the best column space to use as to
minimize the mean-squared error (MSE)
\begin{equation}
  \label{eq:oracle}
 \inf_U \E \|M - \hat{M}[U]\|^2. 
\end{equation}
The question is whether it is possible to mimic the performance of the
oracle and achieve a MSE close to \eqref{eq:oracle} with a real
estimator.

Before giving a precise answer to this question, it is useful to
determine how large the oracle risk is. To this end, consider a fixed
orthogonal matrix $U$, and write the least-squares estimate
\eqref{eq:MU} as
\[
\hat{M}[U] := U {\cal H}_U(y), \quad {\cal H}_U = ({\cal A}_U^*{\cal
  A}_U)^{-1} {\cal A}_U^*,
\]
where ${\cal A}_U$ is the linear map
\begin{equation}
\label{eq:AU}
\begin{array}{lllcl}
{\cal A}_U & : & \R^{r\times n} & \goto & \R^{m}\\
& &  R & \mapsto & {\cal A}(UR), 
\end{array}
\end{equation}
and 
\[
\begin{array}{lllcl}
  {\cal A}_U^* & : & \R^{m} & \goto & \R^{r \times n}\\
  & & y & \mapsto & U^*{\cal A}^*(y).
\end{array}
\]
Then decompose the MSE as the sum of the squared bias and variance 
\begin{align*}
\E \|M - \hat{M}[U]\|_F^2 & = \|\text{bias}\|^2 + \text{variance}\\
& = \|\E  \hat{M}[U] - M\|_F^2 + \E \|U {\cal H}_U(z)\|_F^2.
\end{align*}
The variance term is classically equal to
\[
\E \fronorm{U {\cal H}_U(z)}^2 = \E \|{\cal H}_U(z)\|_F^2 = \sigma^2 \trace({\cal H}_U^*{\cal H}_U) =
\sigma^2 \trace( ({\cal A}_U^*{\cal A}_U)^{-1}).
\]
Due to the restricted isometry property, all the eigenvalues of the
linear operator ${\cal A}_U^*{\cal A}_U$ belong to the interval
$[1-\delta_r, 1+\delta_r]$, see Lemma \ref{teo:AU}. Therefore, the
variance term obeys
\[
\sigma^2 \trace( ({\cal A}_U^*{\cal A}_U)^{-1}) \ge
\frac{1}{1+\delta_r}\, nr \sigma^2.
\]
For the bias term, we have 
\[
\E \hat{M}[U] - M = U ({\cal A}_U^*{\cal A}_U)^{-1} {\cal A}_U^* {\cal
  A}(M) - M, 
\]
which we rewrite as 
\begin{align*}
\E  \hat{M}[U] - M & = U  ({\cal A}_U^*{\cal
  A}_U)^{-1} {\cal A}_U^* {\cal A}((I - UU^* + UU^*)M) - M\\
& = U  ({\cal A}_U^*{\cal
  A}_U)^{-1} {\cal A}_U^* {\cal A}((I - UU^*)M) +  U  ({\cal A}_U^*{\cal
  A}_U)^{-1} {\cal A}_U^* {\cal A}_U(U^*M) - M\\
& =   U  ({\cal A}_U^*{\cal
  A}_U)^{-1} {\cal A}_U^* {\cal A}((I - UU^*)M) - (I - UU^*)M. 
\end{align*}
Hence, the bias is the sum of two matrices: the first has a column
space included in the span of the columns of $U$ while the column
space of the other is orthogonal to this span. Put $P_{U^\perp}(M) =
(I - UU^*)M$; that is, $P_{U^\perp}(M)$ is the (left) multiplication
with the orthogonal projection matrix $(I - UU^*)$. We have
\begin{align*}
\|\E \hat{M}[U] - M\|^2 & = \|U ({\cal A}_U^*{\cal A}_U)^{-1} {\cal
  A}_U^* {\cal A}(P_{U^\perp}(M))\|_F^2 + \|P_{U^\perp}(M)\|_F^2\\
& \ge \|P_{U^\perp}(M)\|_F^2.
\end{align*}

 To summarize, the oracle bound obeys
\[
\inf_U \E \|M - \hat{M}[U]\|^2 \ge \inf_U \, \left[
\|P_{U^\perp}(M)\|_2^2 + \frac{nr \sigma^2}{1+\delta_r} \right]. 
\]
Now for a given dimension $r$, the best $U$ -- that minimizing the
squared bias term or its proxy $\|P_{U^\perp}(M)\|_F^2$ -- spans the
top $r$ singular vectors of the matrix $M$. Denoting the singular
values of $M$ by $\sigma_i(M)$, we obtain
\[
\inf_U \E \|M - \hat{M}[U]\|^2 \ge \inf_r \, \left[\sum_{i > r}
  \sigma_i^2(M)
  \ + \frac{1}{2} nr \sigma^2\right],
\]
which for convenience we simplify to
\begin{equation}
  \label{eq:oraclebound}
  \inf_U \E \|M - \hat{M}[U]\|^2 \ge \frac{1}{2} 
\sum_i \min(\sigma_i^2(M), n\sigma^2). 
\end{equation}
The right-hand side has a nice interpretation. Write the SVD of $M$ as
$ M = \sum_{i = 1}^r \sigma_i(M) u_i v_i^*$.  Then if $\sigma_i^2(M) >
n \sigma^2$, one should try to estimate the rank-$1$ contribution
$\sigma_i(M)\, u_i v_i^*$ and pay the variance term (which is about
$n\sigma^2$) whereas if $\sigma_i^2(M) \le n \sigma^2$, we should not
try to estimate this component, and pay a squared bias term equal to
$\sigma_i^2(M)$. In other words, the right-hand side may be
interpreted as an {\em ideal} bias-variance trade-off. 

The main result of this section is that the matrix Dantzig Selector
and matrix Lasso achieve this same ideal bias-variance trade-off to
within a constant.
\begin{theorem} 
\label{teo:oracle}
Assume that $\rank(M) \leq r$ and let $\hat{M}_{DS}$ be the solution
to the matrix Dantzig selector \eqref{eq:ds} and $\hat{M}_L$ be the
solution to the matrix Lasso \eqref{eq:lasso}.  Suppose $z$ is a Gaussian error and let
$\lambda = 16n\sigma$ and $\mu = 32n\sigma^2$. If $\delta_{4r} <
\sqrt{2} - 1$, then
  \begin{equation}
    \label{eq:ds21}
    \|\hat{M}_{DS} - M\|^2_F \le C_0 \, \sum_i \min(\sigma_i^2(M), n\sigma^2),  
\end{equation}
and if $\delta_{4r} < (3\sqrt{2} - 1)/17$, then
\begin{equation}
	\label{eq:lasso21}
	\fronorm{\hat{M}_L - M}^2 \leq C_1 \, \sum_i \min(\sigma_i^2(M), n\sigma^2) 
\end{equation}
with probability at least $1 - 2e^{-cn}$ for constants $C_0$ and $C_1$
(depend only on $\delta_{4r}$).
\end{theorem}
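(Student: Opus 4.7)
The plan is to build on Theorem \ref{teo:DS1} by replacing the target rank $r$ with an optimally chosen ``effective rank'' $r_0$ that balances variance against bias. Let $r_0$ denote the number of singular values of $M$ strictly exceeding $\sqrt{n}\,\sigma$, decompose $M = M_{r_0} + M_c$ where $M_{r_0}$ is the best rank-$r_0$ approximation, and observe that the right-hand side of \eqref{eq:ds21} equals exactly $r_0 \cdot n\sigma^2 + \fronorm{M_c}^2$. Let $T_0$ be the tangent space at $M_{r_0}$ and $U_0 V_0^*$ the corresponding sign matrix, so that $M_c \in T_0^\perp$. The first step is to use nuclear-norm optimality $\nucnorm{\hat M} \leq \nucnorm{M}$ combined with the subgradient inequality
\[
\nucnorm{M_{r_0} + Y} \ \geq \ \nucnorm{M_{r_0}} + \langle U_0 V_0^*, Y_{T_0}\rangle + \nucnorm{Y_{T_0^\perp}},
\]
applied with $Y = M_c + H$ where $H := \hat M - M$, to derive the modified cone condition $\nucnorm{H_{T_0^\perp}} \leq \nucnorm{H_{T_0}} + 2\nucnorm{M_c}$.

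The Dantzig feasibility of $\hat M$, together with feasibility of $M$ (which holds with probability at least $1 - 2e^{-cn}$ by Lemma \ref{lem:opNormz}, since $\lambda \geq 2\opnorm{\cA^* z}$ with high probability), yields $\opnorm{\cA^*\cA H} \leq 2\lambda$. I then mimic the block-decomposition/RIP argument used in Theorem \ref{teo:DS1}: partition $H_{T_0^\perp}$ into rank-$r_0$ blocks $H_1, H_2, \ldots$ with decreasing singular values, set $H_0' := H_{T_0} + H_1$, and exploit the RIP (valid at rank $\lesssim r_0 \leq r$ since $\delta_{4r} < \sqrt{2}-1$) both to lower bound $\twonorm{\cA H_0'}^2$ and to control the cross terms $\langle \cA H_0', \cA H_j\rangle$ via polarization. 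Combining with the cone condition gives a bound of the form
\[
\fronorm{H}^2 \ \leq \ C\bigl(r_0 \lambda^2 + \nucnorm{M_c}^2/r_0\bigr),
\]
whose first term matches the variance contribution of the oracle.

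The main obstacle is converting the nuclear-norm bias $\nucnorm{M_c}^2/r_0$ into the Frobenius bias $\fronorm{M_c}^2$: a naive Cauchy-Schwarz gives only $\nucnorm{M_c}^2 \leq (r-r_0)\fronorm{M_c}^2$, wasting a factor $(r-r_0)/r_0$ in the regime where $r_0$ is much smaller than $r$. To close this gap I would exploit that every singular value of $M_c$ is at most $\sqrt{n}\,\sigma$, partitioning $M_c$ dyadically into blocks whose singular values lie in geometric ranges $(2^{-k-1}\sqrt{n}\,\sigma, 2^{-k}\sqrt{n}\,\sigma]$ and running a scale-by-scale version of the preceding argument (each level contributing an effective-rank penalty proportional to its number of singular values weighted by $4^{-k}$), then summing the contributions as a geometric series. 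Finally, the matrix Lasso case follows the same template: the first-order optimality condition $\cA^*(\cA \hat M_L - y) + \mu g = 0$ for some $g \in \partial\nucnorm{\hat M_L}$, combined with $\opnorm{\cA^* z} \leq \mu/2$, produces an analogous constraint $\opnorm{\cA^*\cA H} \leq \tfrac{3}{2}\mu$ and a cone condition with slightly worse constants, which is precisely why the Lasso statement requires the sharper RIP bound $\delta_{4r} < (3\sqrt{2}-1)/17$.
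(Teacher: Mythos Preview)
Your first steps are fine and essentially reproduce the paper's Lemma~3.4 (the inexact-rank version of Theorem~\ref{teo:DS1}): the modified cone condition $\nucnorm{H_{T_0^\perp}}\le\nucnorm{H_{T_0}}+2\nucnorm{M_c}$ together with the RIP block argument yields
\[
\fronorm{\hat M - M}^2 \ \le\ C\bigl(r_0\lambda^2 + \nucnorm{M_c}^2/r_0\bigr).
\]
However, the gap you yourself flag is real, and the dyadic repair you sketch does not close it. Consider the instance where $M$ has $r$ equal singular values $\tau\ll\sqrt n\,\sigma$. The oracle target is $r\tau^2$. For any admissible $r_0\ge 1$ your bound reads $r_0\lambda^2 + (r-r_0)^2\tau^2/r_0$, and the minimum over $r_0\in\{1,\dots,r\}$ is never smaller than $\lambda^2\asymp n\sigma^2$, which can exceed $r\tau^2$ by an arbitrarily large factor. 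A scale-by-scale decomposition of $M_c$ cannot help here because all singular values of $M_c$ sit at a single scale; more fundamentally, the argument produces a single global estimate on $\fronorm{\hat M-M}$, not a sum of per-level contributions, so there is nothing to ``run level by level'' and then sum geometrically.

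The device the paper uses to avoid this obstacle is different in kind: it changes the comparison point. Define the intermediate matrix
\[
\bar M \ =\ \arg\min_X\ \Bigl\{\gamma\,\rank(X)+\twonorm{\cA(X)-\cA(M)}^2\Bigr\},\qquad \gamma=\tfrac{\lambda^2}{4(1+\delta_1)}.
\]
Two facts make this work. First, a rank-one perturbation argument shows $\opnorm{\cA^*\cA(\bar M-M)}\le\lambda/2$, so $\bar M$ is feasible for the Dantzig selector; since $\bar M$ has rank at most $r$, Theorem~\ref{teo:DS1} applied with $\bar M$ in place of $M$ gives $\fronorm{\hat M-\bar M}^2\le C\lambda^2\rank(\bar M)$ \emph{with no nuclear-norm tail term}. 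Second, $\fronorm{\bar M-M}^2$ and $\lambda^2\rank(\bar M)$ are both dominated by $K(\bar M;M)\le K(M_0;M)$, and the latter is bounded directly by the oracle sum because it involves $\twonorm{\cA(M-M_0)}^2\le(1+\delta_r)\fronorm{M-M_0}^2$ rather than a nuclear norm. In short, the key idea you are missing is to compare $\hat M$ not to $M$ but to a feasible low-rank proxy $\bar M$, which kills the $\nucnorm{M_c}$ term altogether and lets the Frobenius bias enter additively. Your treatment of the Lasso (KKT gives the tube constraint and a near-cone inequality, hence the stricter RIP threshold) is correct and matches the paper.
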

In other words, not only does nuclear-norm minimization mimic the performance that one would achieve with an oracle that gives the exact column space of $M$ (as in Theorem \ref{teo:minimax}), but in fact the error bound is within a constant of what one would achieve by projecting onto the optimal column space corresponding only to the significant singular values.

While a similar result holds in the compressive sensing literature
\cite{DS}, we derive the result here using a novel technique. We use a
middle estimate $\bar{M}$ which is the optimal solution to a certain
rank-minimization problem (see Section \ref{sec:proofs}) and is
provably near  $\hat{M}$ and $M$.  With this technique, the proof is
a fairly simple extension of Theorem \ref{teo:DS1}.

\subsection{Extension to full-rank matrices}
\label{sec:instOpt}
In some applications, such as sensor localization, $M$ has exactly low
rank, i.e.~only the top few of its singular values are nonzero.
However, in many applications, such as quantum state tomography, $M$
has full rank, but is well approximated by a low-rank matrix.  In this
section, we demonstrate an extension of the preceding error bound when
$M$ has full rank.

First, suppose $n_1 \leq n_2$ and note that a result of the form
\begin{equation}
\label{eq:fullBiasVar}
\fronorm{\hat{M} - M}^2  \leq C \sum_{i=1}^{n_1} \min(\sigma_i^2(M), n\sigma^2)
\end{equation}
would be impossible when undersampling $M$ because it would imply that as the noise level $\sigma$ approaches zero, an arbitrary full-rank $n\times n$ matrix could be exactly reconstructed from fewer than $n^2$ linear measurements.  Instead, our result essentially splits $M$ into two parts, 
\[
M = \sum_{i=1}^{\bar{r}} \sigma_i(M) u_i v_i^* +
\sum_{i=\bar{r}+1}^{n_1} \sigma_i(M) u_i v_i^* = M_{\bar{r}} + M_c
\]
where $\bar{r} \approx m/n$, and $M_{\bar{r}}$ is the best
rank-$\bar{r}$ approximation to $M$.  The error bound in the theorem
reflects a near-optimal bias-variance trade-off in recovering
$M_{\bar{r}}$, but an inability to recover $M_c$ (and indeed the proof
essentially considers $M_c$ as non-Gaussian noise).  Note that $\bar r
(n_1 + n_2 - \bar r)$ is of the same order as $m$ so that the part of
the matrix which is well recovered has about as many degrees of
freedom as the number of measurements.  
In other words, even in the noiseless case this theorem demonstrates instance optimality i.e. the error bound is proportional to the norm of the part of $M$ that is irrecoverable given the number of measurements (see \cite{wojtaszczyk08} for an analogous result in compressive sensing).  In the noisy case there does not seem to be any current analogue to this error bound in compressive sensing, although the detailed analysis can be translated to the compressive sensing problem and the authors are currently writing a short paper containing this result. 

\begin{theorem} \label{teo:instanceOpt} Fix $M$.  Suppose that $\cA$
  is sampled from the Gaussian measurement ensemble with $m \leq c_0
  n^2/\log(m/n)$ and let $\bar{r} \leq c_1 m/n$ for some fixed
  numerical constants $c_0$ and $c_1$. Let $\hat{M}$ be the solution
  to the matrix Dantzig selector $\eqref{eq:ds}$ with $\lambda = 16
  \sqrt{n} \sigma$ or the solution to the matrix Lasso
  \eqref{eq:lasso} with $\mu = 32 \sqrt{n} \sigma$. Then
\begin{equation}
\label{eq:instOpt}
\fronorm{\hat{M} - M}^2 \leq C\left(\sum_{i=1}^{\bar{r}} \min(\sigma_i^2(M), n\sigma^2) + \sum_{i = \bar{r}+1}^n \sigma_i^2(M) \right)
\end{equation}
with probability greater than $1 - De^{-dn}$ for fixed numerical
constants $C,D, d> 0$. Roughly, the same conclusion extends to
operators obeying the NNQ condition, see below.
\end{theorem}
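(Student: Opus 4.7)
The plan is to reduce Theorem~\ref{teo:instanceOpt} to the low-rank oracle bound of Theorem~\ref{teo:oracle} by treating the tail of $M$ as additional noise, in the spirit of Wojtaszczyk's quotient technique from compressive sensing. Decompose $M = M_{\bar r} + M_c$ where $M_{\bar r}$ is the best rank-$\bar r$ approximation of $M$ and set
\[
y = \cA(M_{\bar r}) + \tilde z, \qquad \tilde z := z + \cA(M_c).
\]
Via the triangle inequality, it suffices to bound $\fronorm{\hat M - M_{\bar r}}^2$ by the first sum in \eqref{eq:instOpt}, since $\fronorm{M_c}^2 = \sum_{i>\bar r}\sigma_i^2(M)$ already contributes the tail term.

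Next I would mimic the proof of Theorem~\ref{teo:oracle}, but with the effective noise $\tilde z$ replacing $z$. That proof rests on (i) the RIP at rank $4\bar r$, which is assured by the hypothesis $\bar r \le c_1 m/n$ and Theorem~\ref{teo:RIP}; and (ii) a bound $\|\cA^*(z)\| \lesssim \sqrt{n}\,\sigma$ of the Dantzig threshold, which is Lemma~\ref{lem:opNormz}. With effective noise, piece (ii) becomes $\|\cA^*(\tilde z)\| \le \|\cA^*(z)\| + \|\cA^*\cA(M_c)\|$, and the second summand cannot be controlled by $\sqrt{n}\,\sigma$ in general. The fix is the nuclear-norm quotient (NNQ) property: with high probability over the Gaussian ensemble, for every $V \in \R^{n_1\times n_2}$ there exists $W$ with $\cA(W) = \cA(V)$ and
\[
\nucnorm{W} \lesssim \sqrt{\bar r}\,\fronorm{V}, \qquad \fronorm{W} \lesssim \fronorm{V}.
\]
Applied with $V = M_c$, one absorbs $W$ into the signal so that $y = \cA(M_{\bar r} + W) + z$ is a genuine Gaussian-noise observation of $M_{\bar r} + W$, whose top $\bar r$ singular values are close to those of $M_{\bar r}$ (by Weyl's inequality and the Frobenius quotient bound) and whose nuclear-norm leakage past rank $\bar r$ is at most $\sqrt{\bar r}\,\fronorm{M_c}$. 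Running the RIP-based inequalities from the proofs of Theorems~\ref{teo:DS1} and \ref{teo:oracle} on this reformulated problem — the key step being that the nuclear norm of the off-support part of the error is controlled exactly as in the low-rank case — yields a variance term of size $\bar r \cdot n\sigma^2$ and a bias term $\sum_{i \le \bar r}\min(\sigma_i^2(M), n\sigma^2)$, which together match \eqref{eq:instOpt}.

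The principal obstacle is establishing NNQ at the stated scaling. My plan is to follow Wojtaszczyk's covering strategy, adapted from $\ell_1$ to the nuclear norm: cover the Euclidean unit ball in $\R^m$ by a polynomial net, for each $w$ in the net take the minimum-nuclear-norm preimage under $\cA$, and bound its nuclear and Frobenius norms using Gaussian concentration together with a Chevet-type estimate for the operator norm of $\cA$ viewed as a map $(\R^{n_1 \times n_2},\nucnorm{\cdot}) \to (\R^m,\twonorm{\cdot})$. The relevant geometric input is that the Gaussian mean width of $\{X:\nucnorm{X}\le\sqrt{r},\fronorm{X}\le 1\}$ scales like $\sqrt{nr}$, which is precisely what produces the $\sqrt{\bar r}$ factor; the logarithmic loss $\log(m/n)$ in the hypothesis $m \le c_0 n^2/\log(m/n)$ is exactly the union-bound cost of passing from a single $w$ to the whole net. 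Once NNQ is in hand the remainder of the argument is routine, and the generalization to operators satisfying NNQ rather than merely being Gaussian is immediate.
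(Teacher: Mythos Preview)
Your overall architecture matches the paper's: split $M = M_{\bar r} + M_c$, invoke a nuclear-norm quotient property to replace $M_c$ by a surrogate $W$ with $\cA(W) = \cA(M_c)$ and small nuclear norm, then run the RIP machinery on $M_{\bar r} + W$. This is exactly what the paper does (its Lemma~\ref{lem:lowNoise}), so the core idea is correct.

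There is, however, a real gap. The NNQ route by itself only yields
\[
\fronorm{\hat M - M}^2 \;\lesssim\; \bar r\,\lambda^2 + \fronorm{M_c}^2 \;\asymp\; \bar r\,n\sigma^2 + \sum_{i>\bar r}\sigma_i^2(M),
\]
which is strictly weaker than \eqref{eq:instOpt} whenever some of the top $\bar r$ singular values fall below $\sqrt{n}\,\sigma$ (take the extreme case $M=0$: your bound reads $\bar r\,n\sigma^2$, whereas \eqref{eq:instOpt} reads $0$). Your sentence ``yields a variance term of size $\bar r\cdot n\sigma^2$ and a bias term $\sum_{i\le\bar r}\min(\sigma_i^2(M),n\sigma^2)$'' does not explain how the oracle refinement arises; applying Theorem~\ref{teo:oracle} to $M_{\bar r}+W$ is illegitimate because that matrix has no rank control, and the middle-estimate $\bar M$ argument needs $\rank(\bar M)\le\bar r$. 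The paper closes this gap by a three-way case split on the size of $K(M_0;M)$ relative to $\gamma\bar r$ and on $\rank(M_0)$: in the high-noise case the oracle lemma (Lemma~\ref{lem:RIPinst}) applies directly because $\rank(\bar M),\rank(M_0)\le\bar r$; in the low-noise case the NNQ lemma applies and $\bar r\,n\sigma^2$ already equals $\sum_{i\le\bar r}\min(\sigma_i^2,n\sigma^2)$; an intermediate case interpolates. You need some such split.

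Two smaller points. First, your NNQ is stated uniformly in $V$ with bounds in terms of $\fronorm{V}$; that cannot hold uniformly (it would force $\|\cA\|\lesssim 1$, false for Gaussian $\cA$ when $m\ll n^2$). The paper's formulation is the set inclusion $\cA(B_*)\supseteq \mu\sqrt{n/m}\,B_{\ell_2}^m$, giving $\nucnorm{W}\le\twonorm{\cA(V)}/(\mu\sqrt{n/m})$ uniformly, and then the bound $\twonorm{\cA(M_c)}\lesssim\fronorm{M_c}$ is invoked separately for the \emph{fixed} $M_c$ via concentration. Second, your Chevet/mean-width plan for NNQ would work but is heavier than needed: the paper just covers the unit sphere of $\R^m$, and for each net point $\bar x$ lower-bounds $\opnorm{\cA^*(\bar x)}$ by the maximal column $\ell_2$-norm of an i.i.d.\ Gaussian matrix, which is elementary chi-square concentration.
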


An interesting note is that in the noiseless case this error bound provides a case of `instance optimality'

First note that $\bar{r}$ is small enough so that the RIP holds with high
probability (see Lemma \ref{teo:RIP}).  However, the theorem requires
more than just the RIP.  The other main requirement is a certain NNQ
condition, which holds for Gaussian measurement ensembles and is
introduced in Section \ref{sec:proofs}.  It is an analogous
requirement to the LQ condition introduced by Wojtaszczyk
\cite{wojtaszczyk08} in compressive sensing. To keep the presentation
of the Theorem simple, we defer the explanation of the NNQ condition
to the proofs section and simply state the theorem for the Gaussian
measurement ensemble.  However, the proof is not sensitive to the use
of this ensemble (for example sub-Gaussian measurements yield the same
result). Many generalizations of this Theorem are available and the
lemmas necessary to make such generalizations are spelled out in
Section \ref{sec:proofs}.

The assumption that $m \leq c n^2/\log(m/n)$ seems to be an artifact
of the proof technique.  Indeed, one would not expect further
measurements to negatively impact performance.  In fact, when $m \geq
c' n^2$ for a fixed constant $c'$, one can use Lemma
\ref{teo:nucBound} from Section \ref{sec:proofs} to derive the error
bound \eqref{eq:instOpt} (with high probability), leaving the
necessity for a small `patch' in the theory when $cn^2/\log(m/n) \leq
n \leq c'n^2$.  However, our results intend to address the situation
in which $M$ is significantly undersampled, i.e. $m \ll n^2$, so the
requirement that $m \leq c n^2/\log(m/n)$ should be intrinsic to the
problem setup.

\section{Proofs}
\label{sec:proofs}

The proofs of several of the theorems use $\epsilon$-nets.  For a set
$S$, an $\epsilon$-net $S_{\epsilon}$ with respect to a norm
$\opnorm{\cdot}$ satisfies the following property: for any $v \in
S$, there exists $v_0 \in S_{\epsilon}$ with $\opnorm{v_0 - v} \leq
\epsilon$.  In other words, $S_\epsilon$ approximates $S$ to within
distance $\epsilon$ with respect to the norm $\opnorm{\cdot}$.  As shown in \cite{VershyninNotes}, there always exists an $\epsilon$-net $S_\epsilon$ satisfying $S_\epsilon \subset S$ and
\[|S_\epsilon| \leq \frac{\text{Vol} \left(S + \frac{1}{2} D\right)}{\text{Vol}\left(\frac{1}{2} D\right)}\]
where $\frac{1}{2} D$ is an $\epsilon/2$ ball (with respect to the norm $\opnorm{\cdot}$) and $S + \frac{1}{2} D = \{x+y: x \in S, y \in \frac{1}{2} D\}$.  In particular, if $S$ is a unit ball in $n$ dimensions (with respect to the norm $\opnorm{\cdot}$) or if it is the surface of the unit ball or any other subset of the unit ball, then $S + \frac{1}{2}D$ is contained in the $1+ \epsilon/2$ ball, and the thus
\[|S_\epsilon| \leq \frac{(1 + \epsilon/2)^n}{(\epsilon/2)^n} = \left(\frac{2 + \epsilon}{\epsilon}\right)^n \leq \left(3/\epsilon\right)^n\]
where the last inequality follows because we always take $\epsilon \leq 1$.  See \cite{VershyninNotes} for a more detailed argument.  We will require in all of our proofs that $S_\epsilon \subset S$.


\subsection{Proof of Lemma \ref{lem:opNormz}}

We assume that $\sigma = 1$ without loss of generality.  Put $Z =
{\cal A}^*(z)$. The norm of $Z$ is given by
\[
\|Z\| = \sup \,\, \<w, Z v\>,
\]
where the supremum is taken over all pairs of vectors on the unit
sphere $S^{n-1}$. Consider a $1/4$-net ${\cal N}_{1/4}$ of $S^{n-1}$
with $|{\cal N}_{1/4}| \leq 12^n$. For each $v,w \in S^{n-1}$,
\begin{align*}
  \<w, Z v\> & = \<w - w_0, Z v\> + \<w_0, Z (v-v_0)\> + \<w_0, Z v_0\>\\
  & \le \|Z\| \|w - w_0\|_{\ell_2} + \|Z\| \|v - v_0\|_{\ell_2} + \<w_0, Z v_0\>
\end{align*}
for some $v_0, w_0 \in {\cal N}_{1/4}$ obeying $\|v - v_0\|_{\ell_2} \le 1/4$,
$\|w - w_0\|_{\ell_2} \le 1/4$. Hence, 
\[
\|Z\| \le 2 \sup_{v_0, w_0 \in {\cal N}_{1/4}} \,\, \<w_0, Z v_0\>.   
\] 
Now for a fixed pair $(v_0,w_0)$, 
\[
\<w_0, Z v_0\> = \trace(w_0^* {\cal A}^*(z) v_0) = \trace(v_0 w_0^*
{\cal A}^*(z)) = \<w_0 v_0^*, {\cal A}^*(z)\> = \<{\cal A}(w_0 v_0^*),
z\>. 
\]
We deduce from this that $\<w_0, Z v_0\> \sim {\cal N}(0, \|{\cal
  A}(w_0 v_0^*)\|_{\ell_2}^2)$. Now
\[
 \|{\cal
  A}(w_0 v_0^*)\|_{\ell_2}^2 \le (1+\delta_1) \|w_0 v_0^*\|_F^2 = (1+\delta_1)
\]
so that by a standard tail bound for Gaussian random variables
\[
\P(|\<w_0, Z v_0\>| \ge \lambda) \le 2
e^{-\frac{1}{2}\frac{\lambda^2}{1+\delta_1}}.
\]
Therefore, 
\[
\P(\max |\<w_0, Z v_0\>| \ge \gamma \sqrt{(1+\delta_1) n}) \le 2
|{\cal N}_{1/4}|^2 e^{-\frac12 \gamma^2 n} \le 2 e^{2n\log
  12 - \frac12 \gamma^2 n},
\]
which is bounded by $2e^{-cn}$ with $c = \gamma^2/2 - 2 \log 12$ (we require $\gamma > 2 \sqrt{\log 12}$ so that $c > 0$).

\subsection{Proof of Theorem \ref{teo:RIP}}
The proof uses a covering argument, starting with the following lemma.
\begin{lemma}[Covering number for low-rank matrices] 
\label{teo:covering}
Let $S_r = \{X \in \R^{n_1 \times n_2} : \rank(X) \leq r, \,
\fronorm{X} = 1\}$.  Then there exists an $\epsilon$-net $\bar{S}_r$
for the Frobenius norm obeying 
\[
|\bar{S}_r| \leq (9/\epsilon)^{(n_1+n_2+1)r}.
\]
\end{lemma}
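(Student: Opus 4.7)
\medskip

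\noindent\textbf{Proof plan.} The natural approach is to decompose any $X \in S_r$ via its SVD as $X = U\Sigma V^*$, where $U \in \R^{n_1 \times r}$ and $V \in \R^{n_2 \times r}$ have orthonormal columns and $\Sigma \in \R^{r\times r}$ is diagonal with $\fronorm{\Sigma} = \fronorm{X} = 1$. I would then build the net $\bar{S}_r$ by covering each of the three factors separately and taking the product.

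For the singular values, the set $D = \{\Sigma \in \R^{r\times r} \text{ diagonal} : \fronorm{\Sigma}=1\}$ is a subset of the unit Frobenius-norm ball in an $r$-dimensional space, so the preamble of this section supplies an $(\epsilon/3)$-net $\bar D$ in the Frobenius norm with $|\bar D| \le (9/\epsilon)^r$. For the left factor, the set of $n_1 \times r$ matrices with orthonormal columns is contained in the operator-norm unit ball of $\R^{n_1 \times r}$, which is a subset of $\R^{n_1 r}$; the same preamble yields an $(\epsilon/3)$-net $\bar{\mathcal{O}}_U$ in operator norm with $|\bar{\mathcal{O}}_U| \le (9/\epsilon)^{n_1 r}$. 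Similarly one gets an $(\epsilon/3)$-net $\bar{\mathcal{O}}_V$ with $|\bar{\mathcal{O}}_V| \le (9/\epsilon)^{n_2 r}$. Define
\[
\bar S_r = \{U_0 \Sigma_0 V_0^* : U_0 \in \bar{\mathcal{O}}_U,\ \Sigma_0 \in \bar D,\ V_0 \in \bar{\mathcal{O}}_V\},
\]
which has cardinality at most $(9/\epsilon)^{(n_1+n_2+1)r}$, matching the claimed bound.

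It then remains to verify that this is an $\epsilon$-net. Given $X = U\Sigma V^* \in S_r$, pick $U_0, \Sigma_0, V_0$ from the three nets with $\opnorm{U-U_0}, \fronorm{\Sigma-\Sigma_0}, \opnorm{V-V_0} \le \epsilon/3$. Using the telescoping identity
\[
X - U_0\Sigma_0 V_0^* = (U-U_0)\Sigma V^* + U_0(\Sigma-\Sigma_0)V^* + U_0\Sigma_0(V-V_0)^*,
\]
and the inequality $\fronorm{ABC} \le \opnorm{A}\fronorm{B}\opnorm{C}$ applied to each term (together with $\fronorm{\Sigma}=\fronorm{\Sigma_0}=1$, $\opnorm{U_0},\opnorm{V_0}\le 1$, $\opnorm{V^*}=1$), one obtains
\[
\fronorm{X - U_0\Sigma_0 V_0^*} \le \opnorm{U-U_0} + \fronorm{\Sigma-\Sigma_0} + \opnorm{V-V_0} \le \epsilon.
\]

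The main (minor) obstacle is simply bookkeeping: one has to pair the operator-norm covers of $U$ and $V$ with the Frobenius cover of $\Sigma$ in exactly the right way so that each term in the telescoping identity gets bounded by an operator-norm-times-Frobenius-norm product; choosing operator-norm nets for the orthonormal factors is essential, because the operator-norm unit ball lives in $n_i r$ dimensions and yields the right exponent, whereas a Frobenius-norm cover of the orthonormal Stiefel manifold would carry an extra $\sqrt{r}$ factor inside the bound.
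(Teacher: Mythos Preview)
Your argument is correct and follows essentially the same route as the paper: factor $X = U\Sigma V^*$, cover each factor with an $\epsilon/3$-net, and bound the telescoping sum term by term. The only difference is that you cover the Stiefel factors in the operator norm while the paper uses the column-wise norm $\onetwonorm{A} = \max_i \twonorm{A_i}$; your choice makes the final step slightly cleaner (the inequality $\fronorm{ABC} \le \opnorm{A}\fronorm{B}\opnorm{C}$ handles all three terms uniformly), whereas the paper exploits the diagonal structure of $\Sigma$ to bound $\fronorm{(U-\bar U)\Sigma}$ column by column.
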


\begin{proof}
  Recall the SVD $X = U \Sigma V^*$ of any $X \in S_r$ obeying
  $\fronorm{\Sigma} = 1$. Our argument constructs an $\epsilon$-net for
  $S_r$ by covering the set of permissible $U,V$ and $\Sigma$. We work
  in the simpler case where $n_1 = n_2 = n$ since the general case is
  a straightforward modification.

  Let $D$ be the set of diagonal matrices with nonnegative diagonal
  entries and Frobenius norm equal to one. We take $\bar{D}$ to be an
  $\epsilon/3$-net for $D$ with $|\bar{D}| \leq (9/\epsilon)^r$.
  Next, let $O_{n,r} = \{U \in \R^{n \times r} : U^* U = I\}$. To
  cover $O_{n,r}$, it is beneficial to use the $\onetwonorm{\cdot}$
  norm defined as
\[
\onetwonorm{X} = \max_i \twonorm{X_i}, 
\]
where $X_i$ denotes the $i$th column of $X$. Let $Q_{n,r} = \{X \in
\R^{n \times r} : \onetwonorm{X} \leq 1\}$. It is easy to see that
$O_{n,r} \subset Q_{n,r}$ since the columns of an orthogonal matrix
are unit normed.  We have seen that there is an $\epsilon/3$-net
$\bar{O}_{n,r}$ for $O_{n,r}$ obeying $|\bar{O}_{n,r}| \leq
(9/\epsilon)^{n r}$.  We now let $\bar{S}_r =
\{\bar{U}\bar{\Sigma}\bar{V}^* : \bar{U}, \bar V \in O_{n,r}, \,
\bar{\Sigma} \in \bar{D}\}$, and remark  that $|\bar{S}_r| \le
|\bar{O}_{n,r}|^2 \, |\bar{D}| \leq (9/\epsilon)^{(2n+1)r}$.  
It remains to show that for all $X \in S_r$ there exists $\bar{X} \in
\bar{S}_r$ with $\fronorm{X - \bar{X}} \leq \epsilon$.

Fix $X \in S_r$ and decompose $X$ as $X = U\Sigma V^*$ as above.  Then
there exist $\bar{X} = \bar{U}\bar{\Sigma}\bar{V}^* \in \bar{S}_{r}$
with $\bar{U}, \bar V \in \bar{O}_{n,r}$, $\bar{\Sigma} \in \bar{D}$
obeying $\onetwonorm{U - \bar{U}} \leq \epsilon/3, \onetwonorm{V -
  \bar{V}} \leq \epsilon/3$, and $\fronorm{\Sigma - \bar{\Sigma}} \leq
\epsilon/3$.  This gives
\begin{align}
  \fronorm{X - \bar{X}} &= \fronorm{U\Sigma V^* - \bar{U} \bar{\Sigma} \bar{V}^*} \nonumber\\
  &= \fronorm{U\Sigma V^* - \bar{U} \Sigma V^* + \bar{U} \Sigma V^* - \bar{U} \bar{\Sigma} V^* + \bar{U} \bar{\Sigma} V^* - \bar{U} \bar{\Sigma} \bar{V}^*} \nonumber\\
  &\leq \fronorm{(U - \bar{U})\Sigma V^*} + \fronorm{\bar{U} (\Sigma -
    \bar{\Sigma})V^*} + \fronorm{\bar{U} \bar{\Sigma} (V -
    \bar{V})^*}.
\label{eq:coveringBound}
\end{align}
For the first term, note that since $V$ is an orthogonal matrix,
$\fronorm{(U - \bar{U})\Sigma V^*} = \fronorm{(U - \bar{U})\Sigma}$,
and
\begin{align*}
  \fronorm{(U - \bar{U})\Sigma}^2 &= \sum_{1\leq i \leq r} \Sigma_{i,i}^2 \twonorm{\bar{U}_i - U_i}^2\\
  &\leq \fronorm{\Sigma}^2 \onetwonorm{U - \bar{U}}^2\\
  &\leq (\epsilon/3)^2. 
\end{align*}
Hence, $\fronorm{(U - \bar{U})\Sigma V^*} \leq \epsilon/3$.  The same
argument gives $\fronorm{\bar{U} \bar{\Sigma} (V - \bar{V})^*} \leq
\epsilon/3$.  To bound the middle term, observe that $\fronorm{\bar{U}
  (\Sigma - \bar{\Sigma})V^*} = \fronorm{\Sigma - \bar{\Sigma}} \leq
\epsilon/3$.
This completes the proof. 
\end{proof}

We now prove Theorem \ref{teo:RIP}. It is a standard argument from
this point, and is essentially the same as the proof of Lemma 4.3 in
\cite{Recht07}, but we repeat it here to keep the paper
self-contained. We begin by showing that $\mathcal{A}$ is an
approximate isometry on the covering set $\bar{S}_r$.  Lemma
\ref{teo:covering} with $\epsilon = \delta/(4\sqrt{2})$ gives
\begin{equation}
|\bar{S}_r| \leq (36\sqrt{2}/\delta)^{(n_1+n_2+1)r}.
\end{equation}
Then it follows from \eqref{eq:concentration} together with the union
bound that
\begin{align}
\P\left(\max_{\bar{X} \in \bar{S}_r} |\twonorm{\mathcal{A}(\bar{X})}^2 -\fronorm{\bar{X}}^2| > \delta/2\right)  &\leq  |\bar{S}_r| C e^{-cm} \nonumber\\
&\leq 2(36\sqrt{2}/\delta)^{(n_1+n_2+1)r}C e^{-cm} \nonumber\\
&= C\exp\left((n_1+n_2+1)r \log(36\sqrt{2}/\delta) - cm\right) \nonumber\\
&\leq 2 \exp(-d m) \nonumber
\end{align}
where $d = c - \frac{\log(36\sqrt{2}/\delta)}{C}$ and we plugged in
both requirements $m \geq C (n_1+n_2+1)r$ and $C >
\log(36\sqrt{2}/\delta)/c$.

Now suppose that
\[\max_{\bar{X} \in \bar{S}_r} |\hspace{.5mm}\twonorm{\mathcal{A}(\bar{X})}^2 -\fronorm{\bar{X}}^2| \leq \delta/2\]
(which occurs with probability at least $1 - C\exp(-dm)$). 
We begin by showing that the upper bound in the RIP condition
holds. Set
\[
\kappa_r = \sup_{X \in S_r} \twonorm{\mathcal{A}(X)}.
\] 
For any $X \in S_r$, there exists $\bar{X} \in \bar{S}_r$ with
$\fronorm{X - \bar{X}} \leq \delta/(4\sqrt{2})$ and, therefore,
\begin{equation}
\label{eq:upperbound}
\twonorm{\mathcal{A}(X)} \leq \twonorm{\mathcal{A}(X-\bar{X})}+\twonorm{\mathcal{A}(\bar{X})} \leq \twonorm{\mathcal{A}(X - \bar{X})} + 1 + \delta/2.
\end{equation}
Put $\Delta X = X - \bar{X}$ and note that $\rank(\Delta X) \leq 2r$.
Write $\Delta X = \Delta X_1 + \Delta X_2$, where $\<\Delta X_1,
\Delta X_2\> = 0$, and $\rank(\Delta X_i) \leq r$, $i = 1,2$ (for
example by splitting the SVD).  Note that $\Delta X_1/\fronorm{\Delta
  X_1}$, $\Delta X_2/\fronorm{\Delta X_2} \in S_r$ and, thus, 
\begin{equation}
  \twonorm{\mathcal{A}(\Delta X)} \leq \twonorm{\mathcal{A}(\Delta X_1)} + \twonorm{\mathcal{A}(\Delta X_2)} \leq \kappa_r(\fronorm{\Delta X_1} + \fronorm{\Delta X_2}).
\end{equation}
Now $\fronorm{\Delta X_1} + \fronorm{\Delta X_2} \leq \sqrt{2}
\fronorm{\Delta X}$ which follows from $\fronorm{\Delta X_1}^2 +
\fronorm{\Delta X_2}^2 = \fronorm{\Delta X}^2$.  Also,
$\fronorm{\Delta X} \leq \delta/(4 \sqrt{2})$ leading to
$\twonorm{\mathcal{A}(\Delta X)} \leq \delta/4$.  Plugging this into
\eqref{eq:upperbound} gives
\[
\twonorm{\mathcal{A}(X)} \leq \kappa_r \delta/4 + 1 + \delta/2.
\]
Since this holds for all $X \in S_r$, we have $\kappa_r \leq \kappa_r
\delta/4 + 1 + \delta/2$ and thus $\kappa_r \leq (1 + \delta/2)/(1 -
\delta/4) \leq 1+\delta$ which essentially completes the upper bound.
Now that this is established, the lower bound now follows from
\[
\twonorm{\mathcal{A}(X)} \geq \twonorm{\mathcal{A}(\bar{X})} - \twonorm{\mathcal{A} \Delta X} \geq 1 - \delta/2 - (1+\delta)\sqrt{2}\delta/(4\sqrt{2}) \geq 1 - \delta. 
\]

Note that we have shown
\[
(1 - \delta) \fronorm{X} \leq \|\cA(X)\|_{\ell_2} \leq (1 + \delta)
\fronorm{X}, 
\]
which can then be easily translated into the desired version of the
RIP bound.

\subsection{Proof of Theorem \ref{teo:DS1}}
We prove Theorems \ref{teo:DS1}, \ref{teo:oracle}, and
\ref{teo:instanceOpt} for the matrix Dantzig selector \eqref{eq:ds}
and describe in Section \ref{sec:extension} how to extend these proofs
to the matrix Lasso.  We also assume that we are dealing with square
matrices from this point forward ($n = n_1 = n_2$) for notational
simplicity; the generalizations of the proofs to rectangular matrices
are straightforward.

We begin by a lemma, which applies to full-rank matrices, and contains
Theorem \ref{teo:DS1} as a special case.\footnote{We did not present
  this lemma in the main portion of the paper because it does not seem
  to have an intuitive interpretation.}
\begin{lemma} 
\label{teo:nucBound}
Suppose $\delta_{4r} < \sqrt{2}-1$ and let $M_r$ be any rank-r matrix.
Let $M_c = M - M_r$. Suppose $\lambda$ obeys $\|{\cal A}^*(z)\| \le
\lambda$. Then the solution $\hat M$ to \eqref{eq:ds} obeys
  \begin{equation}
    \|\hat M - M\|_F \le C_0 \, \sqrt{r} \lambda + C_1 \|M_c\|_*/r,  
  \end{equation}
  where $C_0$ and $C_1$ are small constants depending only on the isometry
  constant $\delta_{4r}$. 
\end{lemma}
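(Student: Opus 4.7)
The plan is to adapt the Cand\`es $\sqrt{2}-1$ argument from compressed sensing to the matrix setting, using the tangent-space decomposition of Recht--Fazel--Parrilo. Set $H = \hat M - M$. The hypothesis $\|\cA^*(z)\| \le \lambda$ means $M$ is feasible for \eqref{eq:ds}, so optimality gives $\|\hat M\|_* \le \|M\|_*$. Since $\hat M$ is also feasible, the triangle inequality on $\cA^*(y - \cA(\hat M))$ and $\cA^*(y - \cA(M))$ yields the dual bound $\opnorm{\cA^*\cA(H)} \le 2\lambda$, which together with the nuclear-norm inequality are the two constraints I will exploit.

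Let $M_r = U\Sigma V^*$ be an SVD and define the tangent space $T = \{UX^* + YV^*:X,Y \in \R^{n\times r}\}$. I would first derive the cone condition $\|\PTp(H)\|_* \le \|\PT(H)\|_* + 2\|M_c\|_*$ as follows: pick $W \in T^\perp$ with $\|W\|\le 1$ and $\langle W,\PTp(H)\rangle = \|\PTp(H)\|_*$, so that $UV^*+W$ is a subgradient of the nuclear norm at $M_r$; then $\|M_r+H\|_* \ge \|M_r\|_* - \|\PT(H)\|_* + \|\PTp(H)\|_*$, while $\|M_r+H\|_* \le \|M+H\|_* + \|M_c\|_* \le \|M\|_* + \|M_c\|_* \le \|M_r\|_* + 2\|M_c\|_*$. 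Next, decompose $\PTp(H)$ into rank-$r$ blocks $H_1, H_2, \dots$ ordered by decreasing singular-value magnitude, and set $H_0' = \PT(H) + H_1$, which has rank at most $3r$. The standard matrix $\ell_2/\ell_1$ bound then gives $\sum_{k\ge 2}\fronorm{H_k} \le \|\PTp(H)\|_*/\sqrt{r}$, which after using the cone condition and $\|\PT(H)\|_* \le \sqrt{2r}\,\fronorm{H_0'}$ becomes $\sqrt 2\,\fronorm{H_0'} + 2\|M_c\|_*/\sqrt{r}$.

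The central RIP estimate is then as follows. By RIP, $(1-\delta_{3r})\fronorm{H_0'}^2 \le \|\cA(H_0')\|_{\ell_2}^2 = \langle H_0',\cA^*\cA(H)\rangle - \sum_{k\ge 2}\langle H_0',\cA^*\cA(H_k)\rangle$. The first term is at most $2\lambda\,\nucnorm{H_0'} \le 2\lambda\sqrt{3r}\,\fronorm{H_0'}$ using the dual bound together with $\nucnorm{\cdot} \le \sqrt{\mathrm{rank}(\cdot)}\,\fronorm{\cdot}$. For each cross term, $H_0'$ and $H_k$ are Frobenius-orthogonal and have combined rank at most $4r$; the standard polarization lemma (apply RIP to $H_0'/\fronorm{H_0'} \pm H_k/\fronorm{H_k}$) then gives $|\langle \cA(H_0'),\cA(H_k)\rangle| \le \delta_{4r}\,\fronorm{H_0'}\,\fronorm{H_k}$. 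Dividing by $\fronorm{H_0'}$ and substituting the bound on $\sum_{k\ge 2}\fronorm{H_k}$ from the previous paragraph yields
\[
\bigl(1 - (1+\sqrt{2})\,\delta_{4r}\bigr)\fronorm{H_0'} \le 2\sqrt{3r}\,\lambda + 2\delta_{4r}\,\|M_c\|_*/\sqrt{r},
\]
whose left-hand coefficient is positive precisely when $\delta_{4r} < \sqrt{2}-1$. The triangle inequality $\fronorm{H} \le \fronorm{H_0'} + \sum_{k\ge 2}\fronorm{H_k}$ then produces the stated bound.

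The main obstacle I anticipate is a book-keeping one rather than a conceptual one: tracking the cone condition carefully when $M_r$ is an \emph{arbitrary} rank-$r$ matrix (not necessarily the best rank-$r$ approximation), so that $M_c$ may have a nontrivial $T$-component, and ensuring the polarization step uses $\delta_{4r}$ and not a larger-index isometry constant. Both issues are handled by the telescoping-triangle inequality above and by checking that $\mathrm{rank}(H_0') + \mathrm{rank}(H_k) \le 4r$, which is what pins down the precise threshold $\sqrt{2}-1$.
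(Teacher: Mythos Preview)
Your proposal is correct and follows essentially the same route as the paper's proof: the same tangent-space (``$H_0 + H_c$'') decomposition, the same cone inequality $\|\PTp(H)\|_* \le \|\PT(H)\|_* + 2\|M_c\|_*$, the same singular-value block splitting of $\PTp(H)$, and the same RIP/polarization estimate yielding the factor $1-(1+\sqrt 2)\delta_{4r}$. The only cosmetic difference is that the paper chops $\PTp(H)$ into rank-$2r$ blocks and applies the polarization lemma to $H_0$ and $H_1$ separately (each pairing having combined rank $\le 4r$), whereas you use rank-$r$ blocks and apply polarization directly to $H_0' = \PT(H)+H_1$ against $H_k$; both choices land exactly on $\delta_{4r}$ and the same threshold $\sqrt 2 - 1$.
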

We shall use the fact that $\cA$ maps low-rank orthogonal matrices to
approximately orthogonal vectors.
\begin{lemma}\cite{CT05}
  \label{teo:theta} For all $X$, $X'$ obeying $\<X,X'\> = 0$, and
  $\rank(X) \le r$, $\rank(X') \le r'$, 
\[
|\langle {\cal A}(X), {\cal A}(X')\rangle| \leq \delta_{r + r'} \,
\|X\|_F \, \|X'\|_F.
\]
\end{lemma}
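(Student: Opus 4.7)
The plan is a standard polarization argument, exactly analogous to the vector version in \cite{CT05}. By homogeneity we may first rescale and assume without loss of generality that $\|X\|_F = \|X'\|_F = 1$, since both sides of the desired inequality are bilinear in $(X,X')$.

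Next I would form the two matrices $X+X'$ and $X-X'$. Each is a sum of matrices of ranks at most $r$ and $r'$, so rank is subadditive and $\rank(X \pm X') \leq r+r'$. Moreover, since $\<X,X'\> = 0$ we have
\[
\|X \pm X'\|_F^2 \;=\; \|X\|_F^2 \;\pm\; 2\<X,X'\> \;+\; \|X'\|_F^2 \;=\; 2.
\]
So $X+X'$ and $X-X'$ are both admissible inputs for the RIP bound \eqref{eq:rip} at rank $r+r'$, which gives
\[
\|\cA(X+X')\|_{\ell_2}^2 \;\leq\; 2(1+\delta_{r+r'}), \qquad \|\cA(X-X')\|_{\ell_2}^2 \;\geq\; 2(1-\delta_{r+r'}).
\]

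The final step is the polarization identity in $\R^m$: since $\cA$ is linear,
\[
\<\cA(X), \cA(X')\> \;=\; \tfrac{1}{4}\bigl(\|\cA(X+X')\|_{\ell_2}^2 - \|\cA(X-X')\|_{\ell_2}^2\bigr).
\]
Substituting the two RIP bounds above gives $|\<\cA(X),\cA(X')\>| \leq \tfrac{1}{4}(2(1+\delta_{r+r'}) - 2(1-\delta_{r+r'})) = \delta_{r+r'}$. Undoing the normalization restores the factor $\|X\|_F \|X'\|_F$ on the right and completes the proof.

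There is no real obstacle here: the only thing to be mindful of is that the rank-subadditivity step requires $r+r' \leq n$ so that the isometry constant $\delta_{r+r'}$ is defined, which is implicit, and that the orthogonality hypothesis $\<X,X'\> = 0$ is used only to make $\|X+X'\|_F$ and $\|X-X'\|_F$ equal, which is precisely what lets the polarization identity eliminate the $1$ and leave just $\delta_{r+r'}$. The argument does not need the RIP at any rank higher than $r+r'$, which is what allows this lemma to be a clean building block for the Dantzig-selector error analysis that follows.
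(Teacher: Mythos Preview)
Your proof is correct and follows essentially the same polarization (parallelogram-identity) argument as the paper's own proof. The only minor quibble is that the two one-sided RIP inequalities you display yield $\langle \cA(X),\cA(X')\rangle \le \delta_{r+r'}$ rather than the absolute-value bound; you also need the reverse pair of inequalities (equivalently, replace $X'$ by $-X'$), but this is immediate and the paper handles it the same way by writing both sides of \eqref{eq:rip} for $X\pm X'$ simultaneously.
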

\begin{proof} This is a simple application of the parallelogram
  identity. Suppose without loss of generality that $X$ and $X'$ have
  unit Frobenius norms. Then
  \[
  (1-\delta_{r+r'}) \|X \pm X'\|_F^2 \le \|{\cal A}(X \pm X')\|_F^2 \le
  (1+\delta_{r+r'}) \|X \pm X'\|_F^2, 
\]
since $\rank(X \pm X') \le r + r'$. We have $\|X \pm X'\|_F^2 = \|X\|_F^2
+ \|X'\|_F^2 = 2$ and the parallelogram identity asserts that
\[
|\langle {\cal A}(X), {\cal A}(X')\rangle|  = \frac{1}{4} \left|  \|{\cal A}(X + X')\|_F^2 -  \|{\cal A}(X - X')\|_F^2 \right| \le
\delta_{r + r'}, 
\]
which concludes the proof.
\end{proof}

The proof of Lemma \ref{teo:nucBound} parallels that of Cand\`es and Tao about
the recovery of nearly sparse vectors from a limited number of
measurements \cite{DS}. It is also inspired by the work of Fazel,
Recht, Cand\`es and Parrilo \cite{FCRP08, Recht07}.  Set $H = \hat M - M$ and
observe that by the triangle inequality,
\begin{equation}
  \label{eq:tube}
\|{\cal A}^*{\cal A}(H)\| \le \|{\cal A}^*({\cal A}(\hat M) - y)\| + \|{\cal A}^*(y - {\cal
  A}(M))\| \le 2\lambda, 
\end{equation}
since $M$ is feasible for the problem \eqref{eq:ds}. Decompose $H$
as
\[
H = H_0 + H_c,  
\]
where $\rank(H_0) \le 2r$, $M_r H_c^* = 0$ and $M_r^* H_c = 0$ (see \cite{Recht07}). We
have
\begin{align*}
  \|M + H\|_* & \ge \|M_r + H_c\|_* - \|M_c\|_* - \|H_0\|_* \\
  & = \|M_r\|_* + \|H_c\|_* - \|M_c\|_* - \|H_0\|_*.
\end{align*}
Since by definition, $\|M + H\|_* \le \|M\|_* \leq \|M_r\|_* + \|M_c\|_*$, this gives 
\begin{equation}
  \label{eq:cone}
  \|H_c\|_* \le \|H_0\|_* + 2 \|M_c\|_*.
\end{equation}

Next, we use a classical estimate developed in \cite{CRT2} (see also
\cite{Recht07}). Let $H_c = U \textrm{diag}(\vec{\sigma}) V^*$ be the
SVD of $H_c$, where $\vec{\sigma}$ is the list of ordered singular
values (not to be confused with the noise standard deviation).
Decompose $H_c$ into a sum of matrices $H_1, H_2, \ldots$, each of
rank at most $2r$ as follows. For each $i$ define the index set $I_i =
\{2r(i-1) + 1, . . . , 2ri\}$, and let $H_i := U_{I_i}
\textrm{diag}(\vec{\sigma}_{I_i}) V_{I_i}^*$; that is, $H_1$ is the
part of $H_c$ corresponding to the $2r$ largest singular values, $H_2$
is the part corresponding to the next $2r$ largest and so on.  A now
standard computation shows that
\begin{equation}
  \label{eq:usual}
  \sum_{j \ge 2} \|H_j\|_F \le \frac{1}{\sqrt{2r}} \, \|H_c\|_*,  
\end{equation}
and thus
\[
\sum_{j \ge 2} \|H_j\|_F \le \|H_0\|_F + \sqrt{\frac{2}{r}} \|M_c\|_* 
\]
since $\|H_0\|_* \le \sqrt{2r} \, \|H_0\|_F$ by Cauchy-Schwarz.  

Now the restricted isometry property gives
\begin{equation}
\label{eq:detlafour}
(1-\delta_{4r})\|H_0 + H_1\|^2_F \le \|{\cal A}(H_0+H_1)\|_F^2, 
\end{equation}
and observe that 
\[
\|{\cal A}(H_0+H_1)\|_F^2 = \<{\cal A}(H_0 + H_1), {\cal A}(H -
\sum_{j \ge 2} H_j)\>.
\]
We first argue that
\begin{equation}
  \label{eq:AH}
  \<{\cal A}(H_0+H_1), {\cal A}(H)\> \le \|H_0+H_1\|_F \, \sqrt{4r} \|{\cal A}^*{\cal A}(H)\|.   
\end{equation}
To see why this is true, let $U \Sigma V^*$ be the reduced SVD of $H_0
+ H_1$ in which $U$ and $V$ are $n \times r'$, and $\Sigma$ is $r'
\times r'$ with $r' = \rank(H_0+H_1) \le 4r$.  We have
\begin{align*}
  \<{\cal A}(H_0+H_1), {\cal A}(H)\> & = \<H_0+H_1,  {\cal A}^* {\cal A}(H)\> \\
  & = \<\Sigma, U^* [{\cal A}^* {\cal A}(H)] V\> \\
  & \le \|\Sigma\|_F \|U^* [{\cal A}^* {\cal A}(H)] V\|_F\\
  & = \|H_0+H_1\|_F \|U^* [{\cal A}^* {\cal A}(H)] V\|_F.
\end{align*}
The claim follows from $\|U^* [{\cal A}^* {\cal A}(H)] V\|_F \le
\sqrt{r'} \|{\cal A}^*{\cal A}(H)\|$, which holds since $U^* [{\cal
  A}^* {\cal A}(H)] V$ is an $r' \times r'$ matrix with spectral norm
bounded by $\|{\cal A}^*{\cal A}(H)\|$.  Second, Lemma \ref{teo:theta}
implies that for $j \geq 2$
\begin{equation}
  \label{eq:HOHj}
  \<{\cal A}(H_0), {\cal A}(H_j)\> \le \delta_{4r} \|H_0\|_F \, \|H_j\|_F, 
\end{equation}
and similarly with $H_1$ in place of $H_0$. Note that because $H_0$ is
orthogonal to $H_1$, we have that $\|H_0 + H_1\|_F^2 = \|H_0\|_F^2 +
\|H_1\|_F^2$ and thus $\|H_0\|_F + \|H_1\|_F \le \sqrt{2}
\|H_0+H_1\|_F$. This gives
\begin{equation}
  \label{eq:HOHjp}
  \<{\cal A}(H_0+H_1), {\cal A}(H_j)\> \le \sqrt{2} \delta_{4r} \|H_0+H_1\|_F \, \|H_j\|_F. 
\end{equation}

Taken together, \eqref{eq:detlafour}, \eqref{eq:AH} and
\eqref{eq:HOHjp} yield
\begin{align*}
  (1-\delta_{4r}) \|H_0+H_1\|_F & \le \sqrt{4r} \|{\cal A}^*{\cal
    A}(H)\|
  + \sqrt{2} \delta_{4r} \sum_{j \ge 2} \|H_j\|_F\\
  & \le \sqrt{4r} \|{\cal A}^*{\cal A}(H)\| + \sqrt{2} \delta_{4r}
  \|H_0\|_F + \frac{2\delta_{4r}}{\sqrt{r}} \|M_c\|_*.
\end{align*}
To conclude, we have that 
\[ 
\|H_0+H_1\|_F \le C_1 \, \sqrt{4r} \|{\cal
  A}^*{\cal A}(H)\| + C_1 \frac{2\delta_{4r}}{\sqrt{r}} \|M_c\|_*, \quad C_1 = 1/[1-(\sqrt{2}+1)\delta_{4r}], 
\]
provided that $C_1 > 0$.  Our claim \eqref{eq:ds1} then follows from
\eqref{eq:tube} together with
\[
  \|H\|_F  \le \|H_0 + H_1\|_F + \sum_{j \ge 2} \|H_j\|_F \le 2 \|H_0+H_1\|_F + \sqrt{\frac{2}{r}} \|M_c\|_*. 
\]

\subsection{Proof of Theorem \ref{teo:DS1}}
Theorem \ref{teo:DS1} follows by simply plugging $M_r = M$ into
Theorem \ref{teo:nucBound}.  To generalize the results, note that
there are only two requirements on $M, \cA$ and $y$ used in the proof.
\begin{itemize}
\item $\opnorm{\cA^*(\cA(M) - y)} \leq \lambda$
\item $\rank(M) = r$ and $\delta_{4r} < \sqrt{2} -1$.
\end{itemize}
Thus, the steps above also prove the following Lemma which is useful
in proving Theorem \ref{teo:oracle}.
\begin{lemma}
\label{lem:generalM}
Assume that $X$ is of rank at most $r$ and that $\delta_{4r} <
\sqrt{2}-1$. Suppose $\lambda$ obeys $\|{\cal A}^*(y - \cA(X))\| \le
\lambda$. Then the solution $\hat M$ to \eqref{eq:ds} obeys
  \begin{equation}
    \|\hat M - X\|^2_F \le C_0 \, r \lambda^2. 
  \end{equation}
  where $C_0$ is a small constant depending only on the isometry
  constant $\delta_{4r}$. 
\end{lemma}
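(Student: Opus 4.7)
The plan is to observe that the proof of Theorem~\ref{teo:DS1} (equivalently, Lemma~\ref{teo:nucBound}) never uses the fact that $M$ is the true matrix generating the data: the entire chain of inequalities rests on only two properties, namely (i) the tube constraint $\|\cA^*(\cA(M)-y)\| \le \lambda$ and (ii) the rank bound $\rank(M) \le r$ combined with $\delta_{4r} < \sqrt 2 - 1$. Both are precisely the hypotheses placed on the arbitrary matrix $X$ here, so my strategy is simply to replay the argument verbatim with $X$ substituted for the true signal $M$ and with the tail contribution $M_c$ set to zero.

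Concretely, I would set $H = \hat M - X$ and work through the three main steps. First, the tube bound $\|\cA^*\cA(H)\| \le 2\lambda$ follows from the triangle inequality, using feasibility of $\hat M$ for \eqref{eq:ds} together with the hypothesis $\|\cA^*(y-\cA(X))\| \le \lambda$; this is exactly \eqref{eq:tube} with $X$ in place of $M$. Second, since $X$ is itself feasible for the Dantzig selector (by the same hypothesis), optimality of $\hat M$ forces $\|\hat M\|_* \le \|X\|_*$. Decomposing $H = H_0 + H_c$ with $\rank(H_0) \le 2r$ and $X H_c^* = X^* H_c = 0$ then yields the cone condition $\|H_c\|_* \le \|H_0\|_*$, which is \eqref{eq:cone} with $M_c = 0$.

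Third, I would re-use the block decomposition $H_c = H_1 + H_2 + \cdots$ into pieces of rank at most $2r$, the standard estimate $\sum_{j \ge 2} \|H_j\|_F \le \|H_c\|_*/\sqrt{2r} \le \|H_0\|_F$ (via Cauchy--Schwarz), the RIP applied to $H_0 + H_1$, and the near-orthogonality estimate of Lemma~\ref{teo:theta}. With every occurrence of $\|M_c\|_*$ identically zero, the remaining bookkeeping reproduces the bound $\|H_0+H_1\|_F \le C_1 \sqrt{4r}\,\|\cA^*\cA(H)\|$ with $C_1 = 1/[1 - (\sqrt 2 + 1)\delta_{4r}]$, and hence $\|H\|_F \le C_0 \sqrt{r}\,\lambda$ for a constant $C_0$ depending only on $\delta_{4r}$; squaring yields the stated conclusion. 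The only real obstacle here is a bookkeeping audit confirming that no step in the proof of Theorem~\ref{teo:DS1} silently invokes $y = \cA(X) + z$, and the two requirements listed at the start of this subsection make that audit immediate.
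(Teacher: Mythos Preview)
Your proposal is correct and matches the paper's own treatment essentially verbatim: the paper states that the proof of Lemma~\ref{teo:nucBound} uses only the two requirements $\|\cA^*(\cA(M)-y)\|\le\lambda$ and $\rank(M)\le r$ with $\delta_{4r}<\sqrt2-1$, so the same steps prove Lemma~\ref{lem:generalM} once $M$ is replaced by $X$. Your explicit walk-through of the substitution (in particular setting $M_c=0$ and noting that feasibility of $X$ gives $\|\hat M\|_*\le\|X\|_*$) is exactly the audit the paper gestures at.
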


\subsection{Proof of Theorem \ref{teo:oracle}}
In this section, $\lambda = 16n\sigma^2$ and we take as given that
$\opnorm{\cA^*(z)} \leq \lambda/2$ (and thus, by Lemma
\ref{lem:opNormz}, the end result holds with probability at least $1 -
2e^{-cn}$).  The novelty in this proof -- the way it differs from
analogous proofs in compressive sensing -- is in the use of a middle
estimate $\bar{M}$. Define $K$ as
\begin{equation}
\label{eq:barM}
K(X; M) \equiv \gamma \rank(X) + \twonorm{\cA(X) - \cA(M)}^2, \qquad \gamma = \frac{\lambda^2}{4(1+\delta_1)} 
\end{equation}
and let $\bar{M} = \text{argmin}_X K(X,M)$.  In words, $\bar{M}$
achieves a compromise between goodness of fit and parsimony in the
model with noiseless data. The factor $\gamma$ could be replaced by
$\lambda^2$, but the derivations are cleanest in the present form. We
begin by bounding the distance between $M$ and $\bar{M}$ using the
RIP, and obtain
\begin{equation}
\label{eq:boundMbarM}
 \twonorm{\bar{M} - M}^2 \leq \frac{1}{1 - \delta_{2r}} \twonorm{\cA(\bar{M}) - \cA(M)}^2
\end{equation}
where the use of the isometry constant $\delta_{2r}$ follows from the
fact that $\rank(\bar{M}) \leq \rank(M)$.

We now develop a bound about $\twonorm{\hat{M} - \bar{M}}^2$.  Lemma
\ref{lem:opNorm} gives
\[
\opnorm{\cA^*(y - \cA(\bar{M}))} \leq \opnorm{\cA^*(z)} +
\opnorm{\cA^*\cA(M - \bar{M})} \leq \lambda, 
\]
i.e.~$\bar{M}$ is feasible for \eqref{eq:ds}.  Also, $\rank(\bar{M})
\leq \rank(M)$ and, thus, plugging $\bar{M}$ into Lemma
\ref{lem:generalM} gives
\[
\fronorm{\hat{M} - \bar{M}}^2 \leq C \lambda^2 \rank(\bar{M}).
\]
Combining this with \eqref{eq:boundMbarM} gives
\begin{align}
  \fronorm{\hat{M} - M}^2 &\leq 2 \fronorm{\hat{M} - \bar{M}}^2 + 2 \fronorm{\bar{M} - M}^2 \nonumber\\
  &\leq 2C \lambda^2 \rank(\bar{M}) +  \frac{2}{1 - \delta_{2r}} \twonorm{\cA(\bar{M}) - \cA(M)}^2 \nonumber\\
  &\leq C' K(\bar M; M)
\label{eq:biasVarfBound}
\end{align}
where $C' = \max(8 C(1+\delta_1), 2/(1-\delta_{2r}))$. 

Now $\bar{M}$ is the minimizer of $K(\cdot; M)$, and so $ K(\bar
M; M) \le K(M_0;M)$, where
\begin{equation}
\label{eq:Mnot}
M_0 = \sum_i \,\sigma_i(M) 1_{\{\sigma_i(M) > \lambda\}}\,u_i v_i^*. 
\end{equation}
We have 
\begin{align*}
  K(M_0; M) &\leq \gamma \sum_{i=1}^r 1_{\{\sigma_i(M) > \lambda\}} + \twonorm{\cA(M-M_0)}^2\\
  &\leq \gamma \sum_{i=1}^r 1_{\{\sigma_i(M) > \lambda\}} + (1+\delta_r)\fronorm{M - M_0}^2\\
  &\leq (1+\delta_r)\sum_{i=1}^r \min(\lambda^2, \sigma_i^2(M)).
\end{align*}
In conclusion, the proof follows from $\lambda = 16n\sigma^2$ since
\[
\fronorm{\hat{M} - M}^2 \leq C' \sum_{i=1}^r \min(\lambda^2, 
\sigma_i^2(M)).
\] 

\begin{lemma}
\label{lem:opNorm}
The minimizer $\bar{M}$ obeys
\[
\opnorm{\cA^*\cA(\bar{M} - M)} \leq \lambda/2.
\]
\end{lemma}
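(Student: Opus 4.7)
The plan is to prove the bound by contradiction using a one-step local perturbation of the optimizer $\bar M$ in a well-chosen rank-one direction. The intuition is that optimality of $\bar M$ in the combinatorial program \eqref{eq:barM} should pin down a first-order-type condition for the smooth part of $K$ on the rank-1 scale, and this condition is precisely what the lemma claims.

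Concretely, suppose for contradiction that $\opnorm{\cA^*\cA(\bar M - M)} > \lambda/2$. Since the operator norm of a matrix $W$ equals $\sup_{\|u\|_{\ell_2}=\|v\|_{\ell_2}=1}\langle uv^*, W\rangle$, I can pick unit vectors $u,v$ with
\[
\bigl\langle uv^*,\ \cA^*\cA(\bar M - M)\bigr\rangle = -\alpha, \qquad \alpha := \opnorm{\cA^*\cA(\bar M - M)} > \lambda/2
\]
(choose the sign of $u$ appropriately). Consider the perturbation $\bar M + t\, uv^*$ for a real scalar $t$ to be fixed momentarily. Because $uv^*$ has rank one, $\rank(\bar M + t\, uv^*)\le \rank(\bar M)+1$, and because $uv^*$ is rank-one with unit Frobenius norm, the RIP at rank $1$ gives $\twonorm{\cA(uv^*)}^2 \le 1+\delta_1$. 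Expanding the quadratic part, I get
\[
K(\bar M + t\, uv^*;M) \le K(\bar M;M) + \gamma + 2t \bigl\langle uv^*, \cA^*\cA(\bar M-M)\bigr\rangle + t^2(1+\delta_1).
\]

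Plugging in the assumed inequality and optimizing in $t$ (the optimal choice is $t^\star = \alpha/(1+\delta_1)$) yields
\[
K(\bar M + t^\star uv^*;M) \le K(\bar M;M) + \gamma - \frac{\alpha^2}{1+\delta_1}.
\]
With the chosen value $\gamma = \lambda^2/[4(1+\delta_1)]$, the hypothesis $\alpha > \lambda/2$ gives $\alpha^2/(1+\delta_1) > \gamma$, so $K(\bar M + t^\star uv^*;M) < K(\bar M;M)$, contradicting the definition of $\bar M$ as a minimizer of $K(\,\cdot\,;M)$. Therefore $\opnorm{\cA^*\cA(\bar M - M)}\le \lambda/2$.

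There is no real obstacle here beyond correctly tracking constants: the only nontrivial ingredient is the rank-$1$ RIP to control $\twonorm{\cA(uv^*)}^2$, and the key design choice is the magnitude of $\gamma$ in the definition of $K$, which was precisely set up so that the marginal cost $\gamma$ of raising the rank by one is exactly cancelled by an $O(\lambda^2)$ quadratic gain from moving in the steepest-descent rank-one direction. The contrapositive/contradiction framing is the natural way to avoid differentiating $\rank(\cdot)$.
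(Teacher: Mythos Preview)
Your proof is correct and follows essentially the same route as the paper: a rank-one perturbation of $\bar M$ in the direction realizing the operator norm, expansion of the quadratic term, the RIP bound $\twonorm{\cA(uv^*)}^2\le 1+\delta_1$, and the observation that the choice $\gamma=\lambda^2/[4(1+\delta_1)]$ makes the rank penalty exactly lose to the quadratic gain when the operator norm exceeds $\lambda/2$. The only cosmetic difference is that the paper picks the step size as $\langle uv^*,\cA^*\cA(\bar M-M)\rangle/\twonorm{\cA(uv^*)}^2$ (the exact minimizer of the true quadratic), whereas you minimize the slightly looser upper bound in $t$; both lead to the same contradiction.
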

\begin{proof}
  Suppose not.  Then there are unit-normed vectors $u,v \in \R^n$
  obeying
\[ 
\<uv^*, \cA^*\cA(\bar{M} - M)\> > \lambda/2. 
\] 
We construct the rank-1 perturbation $M' = \bar{M} - \alpha uv^*$,
$\alpha = \<uv^*, \cA^*\cA(\bar{M} - M)\>/\twonorm{\cA(uv^*)}^2$, and
claim that $K(M':M) < K(\bar{M};M)$ thus providing the
contradiction.  We have 
\begin{align*}
  \twonorm{\cA(M' - M)}^2 & = \twonorm{\cA(\bar M - M)}^2 -2\alpha
  \<\cA(uv^*), \cA(\bar M - M)\> + \alpha^2
  \twonorm{\cA(uv^*)}^2\\
  & = \twonorm{\cA(\bar M - M)}^2 - \alpha^2 \twonorm{\cA(uv^*)}^2.
\end{align*}
It then follows that 
\begin{align*}
  K(M'; M) &\leq \gamma (\rank(M) + 1) + \twonorm{\cA(\bar{M}-M)}^2
  -
  \alpha^2 \twonorm{\cA(uv^*)}^2\\
  &= K(\bar M; M) + \gamma - \alpha^2 \twonorm{\cA(uv^*)}^2.
\end{align*}
However, $\twonorm{\cA(uv^*)}^2 \le (1 + \delta_1) \fronorm{uv^*}^2 =
1 + \delta_1$ and, therefore, $\alpha^2 \twonorm{\cA(uv^*)}^2 >
\gamma$ since $\<uv^*, \cA^*\cA(\bar{M} - M)\> > \lambda/2$. 
\end{proof}

\subsection{Proof of Theorem \ref{teo:instanceOpt}}

Three useful lemmas are established in the course of the proof of this
more involved result, and we would like to point out that these can be
used as powerful error bounds themselves.  Throughout the proof, $C$
is a constant that may depend on $\delta_{4r}$ only, and whose value
may change from line to line.  An important fact to keep in mind is
that under the assumptions of the theorem, $\delta_{4\bar{r}}$ can be
bounded, with high probability, by an arbitrarily small constant
depending on the size of the scalar $c_1$ appearing in the condition
$\bar{r} \leq c_1 m/n$.  This is a consequence of Theorem
\ref{teo:RIP}.  In particular, $\delta_{4\bar{r}} \leq (\sqrt{2} -
1)/2$ with probability at least $1 - De^{-dm}$.

\begin{lemma} 
\label{lem:justRIP}
Let $\bar{M}$ and $M_0$ be defined via \eqref{eq:barM} and
\eqref{eq:Mnot}, and set 
\[
r = \max(\rank(\bar{M}), \rank(M_0)).
\]
Suppose that $\delta_{4r} < \sqrt{2} - 1$ and that $\lambda$ obeys
$\|{\cal A}^*(z)\| \le \lambda/2$. Then the solution $\hat M$ to
\eqref{eq:ds} obeys
  \begin{equation}
    \label{eq:justRIP}
    \|\hat M - M\|^2_F \le C_0 \left( \sum_{i=1}^n \min(\lambda^2, \sigma_i^2(M)) +  \twonorm{\cA(M - M_0)}^2\right),   
  \end{equation}
  where $C_0$ is a small constant depending only on the isometry
  constant $\delta_{4r}$.
\end{lemma}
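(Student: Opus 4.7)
My approach is to follow the same three-step architecture used in the proof of Theorem \ref{teo:oracle}, but with an extra care for the fact that $M$ is no longer assumed to be low-rank. The steps are: (i) show that the middle estimate $\bar M$ is feasible for the Dantzig selector, (ii) apply a low-rank recovery bound to control $\fronorm{\hat M - \bar M}^2$, and (iii) bound $\fronorm{\bar M - M}^2$ by exploiting the minimality of $K(\cdot;M)$.

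For step (i), I would apply Lemma \ref{lem:opNorm} to obtain $\opnorm{\cA^*\cA(\bar M - M)} \le \lambda/2$; crucially the proof of that lemma only uses that $\bar M$ minimizes $K(\cdot;M)$ and constructs a rank-1 perturbation, so it holds for arbitrary (possibly full-rank) $M$. Combined with the hypothesis $\opnorm{\cA^*(z)} \le \lambda/2$, the triangle inequality gives $\opnorm{\cA^*(y - \cA(\bar M))} \le \lambda$, so $\bar M$ is feasible. For step (ii), since $\rank(\bar M) \le r$ and $\delta_{4r} < \sqrt 2 - 1$, Lemma \ref{lem:generalM} with $X = \bar M$ yields
\[
\fronorm{\hat M - \bar M}^2 \;\le\; C\,\lambda^2 \rank(\bar M).
\]

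The heart of the proof, and the main obstacle, is step (iii). Because $M$ has full rank the RIP cannot be applied directly to $\bar M - M$. The key trick is to split the difference through $M_0$: write $\bar M - M = (\bar M - M_0) - (M - M_0)$, where $\rank(\bar M - M_0) \le \rank(\bar M) + \rank(M_0) \le 2r$, so the RIP does give
\[
\fronorm{\bar M - M_0}^2 \le \tfrac{1}{1-\delta_{2r}}\,\twonorm{\cA(\bar M - M_0)}^2 \le C\bigl(\twonorm{\cA(\bar M - M)}^2 + \twonorm{\cA(M - M_0)}^2\bigr).
\]
Combining with $\fronorm{\bar M - M}^2 \le 2\fronorm{\bar M - M_0}^2 + 2\fronorm{M - M_0}^2$ reduces the problem to controlling $\twonorm{\cA(\bar M - M)}^2$ and $\rank(\bar M)$. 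For both I would invoke $K(\bar M;M) \le K(M_0;M)$, namely
\[
\gamma\,\rank(\bar M) + \twonorm{\cA(\bar M - M)}^2 \;\le\; \gamma\,\rank(M_0) + \twonorm{\cA(M - M_0)}^2,
\]
which immediately gives $\twonorm{\cA(\bar M - M)}^2 \le \gamma\,\rank(M_0) + \twonorm{\cA(M-M_0)}^2$ and $\lambda^2\,\rank(\bar M) \le C\bigl(\lambda^2\,\rank(M_0) + \twonorm{\cA(M-M_0)}^2\bigr)$, using that $\lambda^2/\gamma$ is a constant.

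To finish, I would assemble the pieces via $\fronorm{\hat M - M}^2 \le 2\fronorm{\hat M - \bar M}^2 + 2\fronorm{\bar M - M}^2$ and then bound the $M_0$-dependent terms by quantities appearing on the right-hand side of \eqref{eq:justRIP}: by the definition of $M_0$ we have $\lambda^2 \rank(M_0) = \sum_{i:\sigma_i(M) > \lambda}\lambda^2 \le \sum_i \min(\lambda^2,\sigma_i^2(M))$ and $\fronorm{M-M_0}^2 = \sum_{i:\sigma_i(M) \le \lambda}\sigma_i^2(M) \le \sum_i \min(\lambda^2,\sigma_i^2(M))$. The remaining measurement term $\twonorm{\cA(M-M_0)}^2$ appears untouched on the right-hand side and gives the second summand in \eqref{eq:justRIP}. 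The conceptual obstacle is really the observation that for high-rank $M$ the quantity $\twonorm{\cA(M-M_0)}^2$ cannot be replaced by $\fronorm{M-M_0}^2$ without invoking something stronger than the RIP, which is precisely why it must appear as an explicit term in the bound (and why the NNQ condition is needed in Theorem \ref{teo:instanceOpt} to further control it).
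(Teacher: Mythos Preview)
Your proposal is correct and follows essentially the same route as the paper's proof: split $\bar M - M$ through $M_0$ so that the RIP can be applied to the rank-$2r$ piece $\bar M - M_0$, use Lemma~\ref{lem:opNorm} and Lemma~\ref{lem:generalM} to obtain $\fronorm{\hat M - \bar M}^2 \le C\lambda^2\rank(\bar M)$, and then exploit $K(\bar M;M)\le K(M_0;M)$ to trade the $\bar M$-dependent quantities for $\lambda^2\rank(M_0)$ and $\twonorm{\cA(M-M_0)}^2$. The paper packages these two last terms together as $K(\bar M;M)\le K(M_0;M)$ in one step, whereas you write them out separately, but the arguments are the same.
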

\begin{proof}
  The proof is essentially the same as that of Theorem
  \ref{teo:oracle}, and so we quickly go through the main steps.  Set
  $M_c = M - M_0$ so that $M_c$ only contains the singular values
  below the noise level.  First,
\begin{align*}
  \fronorm{\bar{M} - M}^2 &\leq 2\fronorm{\bar{M} - M_0}^2 + 2\fronorm{M_c}^2\\
  &\leq \frac{2}{1 - \delta_{2r}}\twonorm{\cA(\bar{M} - M_0)}^2 + 2\fronorm{M_c}^2\\
  &\leq \frac{4}{1 - \delta_{2r}} \twonorm{\cA(\bar{M} - M)}^2 +
  \frac{4}{1 - \delta_{2r}} \twonorm{\cA(M_c)}^2 + 2\fronorm{M_c}^2.
\end{align*}
Second, we bound $\fronorm{\hat{M} - \bar{M}}$ using the exact same
steps as in the proof of Theorem \ref{teo:oracle}, and obtain
\[
\fronorm{\hat{M} - \bar{M}}^2 \leq C r \lambda^2.
\]
Hence, 
\[
\fronorm{\hat{M} - M}^2 \leq C (K(\bar{M}; M) + \twonorm{\cA(M_c)}^2 +
\fronorm{M_c}^2). 
\]
Finally, use $K(\bar{M}; M) \leq K(M_0; M)$ as before, and simplify to
attain \eqref{eq:justRIP}.
\end{proof}

The factor $\fronorm{\cA(M_c)}^2$ in \eqref{eq:justRIP} prevents us
from stating the bound as the near-ideal bias-variance-trade-off
\eqref{eq:fullBiasVar}.  However, many random measurement ensembles
obeying the RIP are also unlikely to drastically change the norm of
\textit{any} fixed matrix (see \eqref{eq:concentration}). Thus, we
expect that $\twonorm{\cA(M_c)} \approx \twonorm{M_c}$ with high
probability.  Specifically, if $\cA$ obeys \eqref{eq:concentration},
then
\begin{equation}
\label{eq:smallIncrease}
\twonorm{\cA(M_c)}^2 \leq 1.5 \fronorm{M_c}^2
\end{equation}
with probability at least $1 - De^{-cm}$ for fixed constants $D,c$.
An important point here is that this inequality only holds (with high
probability) when $M_c$ is fixed, and $\cA$ is chosen randomly
(independently).  In the worst-case-scenario, one could have
\[\twonorm{\cA(M_c)} = \opnorm{\cA}\fronorm{M_c}\]
where $\opnorm{\cA}$ is the operator norm of $\cA$.  Thus we emphasize
that the bound holds with high probability for a given $M$ verifying
our conditions, but may not hold uniformly over all such $M$'s.

Returning to the proof, \eqref{eq:smallIncrease} together with
\[
\fronorm{M_c}^2 = \sum_{i=1}^n \sigma_i^2(M) 1_{\{\sigma_i(M) < \lambda\}}
\]
give the following lemma:
\begin{lemma} 
\label{lem:RIPinst}
Fix $M$ and suppose $\cA$ obeys \eqref{eq:concentration}. Then under
the assumptions of Lemma \ref{lem:justRIP}, 
the solution $\hat M$ to \eqref{eq:ds} obeys
  \begin{equation}
    \|\hat M - M\|^2_F \le C_0 \sum_{i=1}^n \min(\lambda^2, \sigma_i^2(M))  
  \end{equation}
  with probability at least $1 - D e^{-cn}$ where $C_0$ is a small
  constant depending only on the isometry constant $\delta_{4r}$, and
  $c$, $D$ are fixed constants.
\end{lemma}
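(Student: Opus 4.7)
The plan is to combine Lemma \ref{lem:justRIP} with the concentration hypothesis \eqref{eq:concentration} applied to a single fixed matrix. Lemma \ref{lem:justRIP} already delivers
\[
\fronorm{\hat M - M}^2 \le C_0 \left(\sum_{i=1}^n \min(\lambda^2, \sigma_i^2(M)) + \twonorm{\cA(M - M_0)}^2\right),
\]
so the only work left is to bound the stray term $\twonorm{\cA(M_c)}^2$, where $M_c := M - M_0$, in terms of the desired bias-variance sum.

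First, I would emphasize that $M_c$ is a \emph{deterministic} matrix: it depends only on $M$ through its SVD and the threshold $\lambda$, and crucially not on the random operator $\cA$. This is the key observation that was flagged in the discussion preceding the statement. Because $M_c$ is fixed before $\cA$ is drawn, one may apply the pointwise concentration bound \eqref{eq:concentration} to this one matrix (taking, say, $t = 1/2$) to conclude
\[
\twonorm{\cA(M_c)}^2 \le \tfrac{3}{2}\,\fronorm{M_c}^2
\]
with probability at least $1 - C e^{-cm}$ for fixed constants $C, c > 0$.

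Second, I would rewrite $\fronorm{M_c}^2$ directly in terms of singular values: by the definition \eqref{eq:Mnot} of $M_0$,
\[
\fronorm{M_c}^2 = \sum_{i=1}^n \sigma_i^2(M)\,\mathbf{1}_{\{\sigma_i(M) \le \lambda\}} \le \sum_{i=1}^n \min(\lambda^2, \sigma_i^2(M)).
\]
Substituting the previous two bounds into the estimate from Lemma \ref{lem:justRIP} produces the claimed inequality, with a constant that is $C_0$ enlarged by a factor of at most $5/2$.

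Finally, I would collect the high-probability events used along the way: the RIP event for $\cA$ at rank $4r$ (from Theorem \ref{teo:RIP}), the bound $\opnorm{\cA^*(z)} \le \lambda/2$ used by Lemma \ref{lem:justRIP} (from Lemma \ref{lem:opNormz}), and the single concentration bound above for the fixed matrix $M_c$. Each holds with probability at least $1 - D e^{-cm}$, and since $m \ge C n$ under the standing hypotheses, a union bound yields a joint failure probability of the form $D' e^{-c' n}$. The main subtlety here is not computational but conceptual: the argument is genuinely pointwise in $M$, and produces a bound that holds with high probability for the \emph{specified} $M$; it does not extend uniformly over all $M$, since a union bound over the infinite family of possible $M_c$'s would blow up the probability estimate. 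This is precisely the worst-case caveat $\twonorm{\cA(M_c)} \le \opnorm{\cA}\fronorm{M_c}$ mentioned in the paragraph between Lemmas \ref{lem:justRIP} and \ref{lem:RIPinst}.
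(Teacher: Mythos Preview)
Your proposal is correct and follows exactly the paper's argument: apply the concentration bound \eqref{eq:concentration} with $t=1/2$ to the fixed matrix $M_c = M - M_0$ to obtain $\twonorm{\cA(M_c)}^2 \le \tfrac32\fronorm{M_c}^2$, then plug this together with $\fronorm{M_c}^2 = \sum_i \sigma_i^2(M)\,\mathbf{1}_{\{\sigma_i(M) < \lambda\}}$ into Lemma \ref{lem:justRIP}. One minor remark: since the RIP condition $\delta_{4r} < \sqrt{2}-1$ and the bound $\opnorm{\cA^*(z)} \le \lambda/2$ are already \emph{assumptions} of Lemma \ref{lem:justRIP} (and hence of the present lemma), the only probabilistic event you actually need to account for here is the single concentration inequality on $M_c$, so the union bound over three events is slightly more than required.
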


The above two lemmas require a bound on the rank of $M_0$.  However,
as the noise level approaches zero, the rank of $M_0$ approaches the
rank of $M$, which can be as large as the dimension.  This requires
further analysis, and in order to provide theoretical error bounds
when the noise level is low (and $M$ has full rank, say), a certain
property of many measurement operators is useful.  We call it the NNQ
property, and is inspired by a similar property from compressive
sensing, see \cite{wojtaszczyk08}.
\begin{definition}[NNQ]
  Let $B_*^{n\times n}$ be the set of $n \times n$ matrices with
  nuclear norm bounded 1.  Let $B_{\ell_2}^m$ be the standard $\ell_2$
  unit ball for vectors in $\R^m$.  We say that $\cA$ satisfies
  NNQ($\alpha$) if
\begin{equation}
\label{eq:NNQ}
\cA(B_*^{n\times n}) \supseteq \alpha B_{\ell_2}^m.
\end{equation}
\end{definition}
This condition may appear cryptic at the moment.  To give a taste of
why it may be useful, note that Lemma \ref{teo:nucBound} includes
$\nucnorm{M - M_r}$ as part of the error bound.  The point is that
using the NNQ condition, we can find a proxy for $M - M_r$, which we
call $\tilde{M}$, satisfying $\cA(\tilde{M}) = \cA(M - M_r)$, but also
$\nucnorm{\tilde{M}} \leq \twonorm{\cA(M - M_r)}/\alpha$.  Before
continuing this line of thought, we prove that Gaussian measurement
ensembles satisfy $\text{NNQ}(\mu \sqrt{n/m})$ with high probability
for some fixed constant $\mu > 0$.

\begin{theorem}[NNQ for Gaussian measurements]
  Suppose $\cA$ is a Gaussian measurement ensemble and $m \leq C
  n^2/\log(m/n)$ for some fixed constant $C > 0$.  Then $\cA$
  satisfies $\text{NNQ}(\mu \sqrt{n/m})$ with probability at least $1
  - 3e^{-cn}$ for fixed constants $c$ and $\mu$.
\end{theorem}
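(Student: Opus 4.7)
My approach is to pass to the dual formulation of NNQ and reduce to a uniform lower bound on $\opnorm{\cA^*(u)}$, which I then establish by combining a strong Gaussian small-ball estimate for a single $u$ with an $\epsilon$-net argument whose entropy budget is made affordable by using the rank-$1$ RIP as a Lipschitz bound.

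First, by the bipolar theorem applied to the symmetric compact convex set $\cA(B_*^{n\times n}) \subseteq \R^m$, together with the fact that $\opnorm{\cdot}$ is the norm dual to $\nucnorm{\cdot}$, the inclusion $\alpha B_{\ell_2}^m \subseteq \cA(B_*^{n\times n})$ is equivalent to
\[
\opnorm{\cA^*(u)} \;\ge\; \alpha\, \twonorm{u} \qquad \text{for every } u \in \R^m.
\]
It therefore suffices to show that $\inf_{u \in S^{m-1}} \opnorm{\cA^*(u)} \ge \mu \sqrt{n/m}$ on an event of probability $\ge 1-3e^{-cn}$.

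Second, for any fixed $u_0 \in S^{m-1}$, the matrix $\cA^*(u_0) = \sum_i (u_0)_i A_i$ has i.i.d.\ $\mathcal{N}(0, 1/m)$ entries because $\twonorm{u_0}=1$. Writing $\cA^*(u_0) = G/\sqrt{m}$ with $G$ an $n\times n$ standard Gaussian, the elementary comparison $\opnorm{G} \ge \fronorm{G}/\sqrt{n}$ and the chi-squared lower tail $\P(\fronorm{G}^2 \le n^2/4) \le e^{-c n^2}$ yield the small-ball estimate
\[
\P\!\left(\opnorm{\cA^*(u_0)} \le \tfrac12 \sqrt{n/m}\right) \;\le\; e^{-c n^2}.
\]
This is very strong for each fixed $u_0$; the question is how to make it uniform over $S^{m-1}$. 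To pass from a net to the whole sphere I need a \emph{dimension-free} Lipschitz bound for $u \mapsto \opnorm{\cA^*(u)}$. Theorem \ref{teo:RIP} at rank one (which holds for the Gaussian ensemble on an event of probability $\ge 1 - C e^{-cm}$) supplies exactly this: from the identity $\opnorm{\cA^*(v)} = \sup_{x,y \in S^{n-1}} \<v, \cA(xy^*)\>$ and the rank-$1$ RIP, I obtain $\opnorm{\cA^*(v)} \le \sqrt{1+\delta_1}\,\twonorm{v}$ uniformly in $v$, so the map is $\sqrt{2}$-Lipschitz.

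Third, I take an $\epsilon$-net $N_\epsilon$ of $S^{m-1}$ with $\epsilon \asymp \sqrt{n/m}$, whose cardinality satisfies $\log|N_\epsilon| \le \tfrac12 m \log(Cm/n) + O(m)$. A union bound of the pointwise small-ball estimate over $N_\epsilon$, combined with the Lipschitz estimate to transfer to arbitrary $u \in S^{m-1}$, bounds the failure event by
\[
|N_\epsilon|\, e^{-c n^2} + C e^{-c m} \;\le\; \exp\!\bigl(\tfrac12 m \log(Cm/n) - c n^2\bigr) + C e^{-cm}.
\]
The hypothesis $m \le c_0 n^2 / \log(m/n)$ (for $c_0$ sufficiently small) is precisely what is needed to absorb the entropy $\tfrac12 m \log(m/n)$ into $c n^2$, leaving a total failure probability of at most $3 e^{-cn}$.

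The main obstacle is the third step: the pointwise small-ball probability $e^{-c n^2}$ is ample, but a naive Lipschitz bound via $\opnorm{\cA^*(v)} \le \fronorm{\cA^*(v)}$ would only give a Lipschitz constant of order $n/\sqrt{m}$, forcing $\epsilon \asymp 1/\sqrt{m}$ and an entropy cost of order $m \log m$, which destroys the budget as soon as $m \gg n^2/\log n$. Using the rank-$1$ RIP to produce a constant Lipschitz bound is what enables the coarser net at scale $\sqrt{n/m}$ (the correct scale of the target lower bound), and it is this gain that explains both why the argument succeeds and why the allowable range of $m$ is governed by the logarithmic factor $\log(m/n)$.
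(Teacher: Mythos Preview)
Your proposal is correct and follows essentially the same route as the paper: dual reformulation to a uniform lower bound on $\opnorm{\cA^*(u)}$, an $\epsilon$-net on $S^{m-1}$ at scale $\epsilon \asymp \sqrt{n/m}$, the rank-$1$ RIP as the Lipschitz device to pass from the net to the sphere, and a pointwise small-ball bound of order $e^{-cn^2}$ combined with the entropy budget afforded by $m \le c_0 n^2/\log(m/n)$. The only cosmetic difference is in the pointwise step: the paper lower-bounds $\opnorm{G}$ by $\max_i \twonorm{G_i}$ and uses independence of columns, whereas you use $\opnorm{G} \ge \fronorm{G}/\sqrt{n}$ and a single chi-squared tail with $n^2$ degrees of freedom---both yield the same $e^{-cn^2}$.
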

\begin{proof}
  Put $\alpha = \mu \sqrt{n/m}$ and suppose $\cA$ does not satisfy
  $\text{NNQ}(\alpha)$.  Then there exists a vector $x \in \R^m$ with
  $\twonorm{x} = 1$ such that
  \[ \<\cA(M), x\> \leq \alpha \quad \text{ for all } M \in
  B_*^{n\times n}.\] 
  In particular,
\[
\opnorm{\cA^*(x)} \leq \alpha.
\]
Let $\bar{B}_{\ell_2}^m \subset B_{\ell_2}^m$ be an $\alpha$-net for $B_{\ell_2}^m$ with
$|\bar{B}_{\ell_2}^m| \leq (3/\alpha)^m$.  Then there exists $\bar{x}
\in \bar{B_{\ell_2}^m}$ with $\twonorm{\bar{x} - x} \leq \alpha$
satisfying
\[
\opnorm{\cA^*(\bar{x})} \leq \opnorm{\cA^*(\bar{x} - x)} +
\opnorm{\cA(x)} \leq \<uv^*, \cA^*(\bar{x} - x)\> + \alpha,
\]
where $u,v$ are the left and right singular vectors of $\cA^*(\bar{x}
- x)$ corresponding to the top singular value.  Then
\[
\<uv^*, \cA^*(\bar{x} - x)\> = \<\cA(uv^*),\bar{x} - x\> \leq
\twonorm{\cA(uv^*)} \, \twonorm{\bar{x} - x} \leq \sqrt{1 +
  \delta_1} \, \alpha
\] 
and, therefore,
\[
\opnorm{\cA^*(\bar{x})} \leq 3\alpha
\] 
assuming $\delta_1 \leq 1$ (this occurs with probability at least $1 -
2e^{-cn}$ when $m \geq Cn$ for fixed constants $c,C$).

We will provide the contradiction by showing that with high
probability, $\opnorm{\cA^*(\bar{x})} > 3 \alpha$, for all $\bar{x}
\in \bar{B}_*^{n\times n}$.  For each $\bar{x}$,
$\cA^*(\bar{x})$ is equal in distribution to $\frac{1}{\sqrt{m}} Z$, where $Z$
is a matrix with i.i.d.~standard normal entries.  Let $Z_i$ be the $i$th
column of $Z$.  Then
\begin{align*}
  \P(\opnorm{\cA^*(\bar{x})} \leq 3 \alpha) &\leq \P(\opnorm{Z} \leq 3 \sqrt{m} \alpha)\\
  &\leq \P(\max_{i=1,\ldots,n} \twonorm{Z_i} \leq 3 \sqrt{m} \alpha);
\end{align*}
the second step uses the fact that the operator norm of $Z$ is always
larger or equal to the $\ell_2$ norm of any column.  With $\alpha =
\mu \sqrt{n/m}$ and using the fact that the columns are independent,
this yields
\[
\P(\opnorm{\cA^*(\bar{x})} \leq 3 \alpha) \leq \P(\twonorm{Z_1}^2 \leq
9 \mu^2 n)^n.
\] 
However, $\twonorm{Z_1}^2$ is a chi-squared random variable with $n$
degrees of freedom, and can be bounded using a standard concentration
of measure result \cite{Laurent00}:
\[
\P(\twonorm{Z_1}^2 - n \leq -t \sqrt{2n}) \leq e^{-t^2/2}.
\]
Hence,
\[
\P(\opnorm{\cA^*(\bar{x})} \leq 3 \alpha) \leq e^{-cn^2},
\]
where $c = (1 - 9 \mu^2)^2/4$ (we require $\mu < 1/3$ here).  Thus, by
the union bound,
\[
\P\left(\min_{\bar{x} \in \bar{B_{\ell_2}^m}} \opnorm{\cA^*(\bar{x})}
  \leq 3 \alpha\right) \leq (3/\alpha)^m e^{-cn^2} = \exp\left(m
  \log\left(\frac{3 \sqrt{m}}{\mu \sqrt{n}}\right) - cn^2\right) \leq
e^{-c' n^2}
\] 
provided that $m \leq C n^2/\log(m/n)$ for fixed constants, $C, c'$.
The theorem is established.
\end{proof}

Note that the preceding proof can be repeated when $\cA$ is a
sub-Gaussian measurement ensemble; the only difference is that $Z$
above will contain sub-Gaussian entries, rather than Gaussian entries.

Using the NNQ property, we can now bound the error when the noise
level is low; this does not involve any condition on the rank of
$M_0$, and does not involve a term in the bound depending on
$\nucnorm{M - M_0}$.  
\begin{lemma}
\label{lem:lowNoise}
Suppose that $\cA$ satisfies NNQ($\mu \sqrt{n/m}$) for a fixed
constant $\mu$ and that $\opnorm{\cA^*(z)} \leq \lambda$.  Let $\bar{r} \geq c m/n$ for some fixed numerical constant $c$, and suppose that  that $\delta_{4\bar r} \leq \frac{1}{2} (\sqrt{2} - 1)$.  Let 
\[M_{\bar{r}} = \sum_{i=1}^{\bar{r}} \sigma_i(M)u_i v_i^*.\]
Let $\hat{M}$ be the solution to $\eqref{eq:ds}$.  
Then
\begin{equation}
  \fronorm{\hat{M} - M} \leq C(\lambda \sqrt{\bar{r}} + 
\twonorm{\cA(M - M_{\bar{r}})}) + \fronorm{M - M_{\bar{r}}}.
\end{equation}
\end{lemma}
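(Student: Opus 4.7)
The plan is to use the NNQ property to replace $M$ by a surrogate $M^*$ to which Lemma \ref{teo:nucBound} applies with $r = \bar r$, and then control $\fronorm{M - M^*}$ separately. Let $M_c := M - M_{\bar r}$. Since $\cA$ satisfies NNQ$(\mu\sqrt{n/m})$, pick $\tilde M$ with $\cA(\tilde M) = \cA(M_c)$ and $\nucnorm{\tilde M} \le \alpha^{-1}\twonorm{\cA(M_c)}$, where $\alpha := \mu\sqrt{n/m}$. Set $M^* := M_{\bar r} + \tilde M$; then $\cA(M^*) = \cA(M)$, so $y - \cA(M^*) = z$, and $M^*$ is indistinguishable from $M$ as far as the data are concerned.

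Apply Lemma \ref{teo:nucBound} to $M^*$, taking its rank-$\bar r$ part to be $M_{\bar r}$ and its residual to be $\tilde M$. Since $\delta_{4\bar r} \le (\sqrt 2 - 1)/2$ and $\opnorm{\cA^*(y - \cA(M^*))} = \opnorm{\cA^*(z)} \le \lambda$, the lemma yields (in the form actually produced by its proof)
\[\fronorm{\hat M - M^*} \le C \sqrt{\bar r}\,\lambda + C\, \nucnorm{\tilde M}/\sqrt{\bar r}.\]
Combining the NNQ bound with $\bar r \ge c m/n$ gives $\alpha\sqrt{\bar r} = \mu\sqrt{n\bar r/m} \ge \mu\sqrt{c}$, hence $\nucnorm{\tilde M}/\sqrt{\bar r} \le C\twonorm{\cA(M - M_{\bar r})}$, and therefore
\[\fronorm{\hat M - M^*} \le C\bigl(\sqrt{\bar r}\,\lambda + \twonorm{\cA(M - M_{\bar r})}\bigr).\]

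It remains to bound $\fronorm{M - M^*} = \fronorm{M_c - \tilde M} \le \fronorm{M - M_{\bar r}} + \fronorm{\tilde M}$. The naive estimate $\fronorm{\tilde M} \le \nucnorm{\tilde M}$ loses a factor $\sqrt{m/n}$, so I would instead split $\tilde M = \sum_{j\ge 1}\tilde M_j$ along its SVD into successive rank-$\bar r$ blocks arranged in decreasing order of singular values. The same telescoping argument underlying \eqref{eq:usual} gives $\sum_{j\ge 2}\fronorm{\tilde M_j} \le \nucnorm{\tilde M}/\sqrt{\bar r} \le C\twonorm{\cA(M_c)}$. For the leading block, the identity $\cA(\tilde M_1) = \cA(M_c) - \sum_{j\ge 2}\cA(\tilde M_j)$ together with the RIP upper bound at rank $\bar r$ gives $\twonorm{\cA(\tilde M_1)} \le C\twonorm{\cA(M_c)}$, and then the RIP lower bound applied to $\tilde M_1$ (rank $\le \bar r$) yields $\fronorm{\tilde M_1} \le C\twonorm{\cA(M_c)}$. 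Summing, $\fronorm{\tilde M} \le C\twonorm{\cA(M - M_{\bar r})}$.

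Combining $\fronorm{\hat M - M} \le \fronorm{\hat M - M^*} + \fronorm{M^* - M}$ delivers the announced inequality. The main obstacle is precisely the Frobenius control of $\tilde M$: the NNQ property restricts only the nuclear norm of the proxy, so the only route to a Frobenius bound is to break $\tilde M$ up along its SVD and use the RIP block by block. The assumption $\bar r \ge cm/n$ is what makes the scaling $\alpha\sqrt{\bar r} = \Theta(1)$, so that the NNQ nuclear bound converts cleanly into a Frobenius bound of order $\twonorm{\cA(M - M_{\bar r})}$; without this calibration the argument would bleed an unwanted factor of $\sqrt{m/n}$.
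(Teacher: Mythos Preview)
Your proposal is correct and follows essentially the same route as the paper: define the proxy $M^* = M_{\bar r} + \tilde M$ via the NNQ property, apply Lemma~\ref{teo:nucBound} to $M^*$ (which shares the same data since $\cA(M^*) = \cA(M)$), and then control $\fronorm{\tilde M}$ by splitting it into rank-$\bar r$ SVD blocks and using the RIP on each piece. The paper's proof is step-for-step the same, including the observation that $\bar r \ge cm/n$ is precisely what makes $\alpha\sqrt{\bar r}$ bounded below so that the nuclear-norm bound on $\tilde M$ converts to a bound of order $\twonorm{\cA(M_c)}$.
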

\begin{proof}
  Set $M_c = M - M_{\bar{r}} = \sum_{i=\bar{r}+1}^n \sigma_i(M) u_i
  v_i^*$. The $\text{NNQ}(\alpha)$ property with $\alpha = \mu
  \sqrt{n/m}$ gives
\[
\cA(M_c) = \cA(\tilde{M})
\] 
for some $\tilde{M}$ obeying $\nucnorm{\tilde{M}} \leq
\twonorm{\cA(M_c)}/\alpha$. We also take note of the identity
$\cA(M_{\bar{r}} + \tilde{M}) = \cA(M)$.  It follows from Lemma
\ref{teo:nucBound} that
\[
\fronorm{\hat{M} - (M_{\bar{r}} + \tilde{M})} \leq C (\lambda
\sqrt{\bar{r}} + \nucnorm{\tilde{M}}/\sqrt{\bar{r}}).
\]
Plugging in $\nucnorm{\tilde{M}} \leq
\twonorm{\cA(M_c)}/\alpha$, 
along with $\bar{r} \geq c m/n$, we obtain
\[
\fronorm{\hat{M} - (M_{\bar{r}} + \tilde{M})} \leq C(\lambda
\sqrt{\bar{r}} + \twonorm{\cA(M_c)}).
\]
Therefore, 
\begin{equation}
\label{eq:instBound}
\fronorm{\hat{M} - M} \leq C(\lambda \sqrt{\bar{r}} + \twonorm{\cA(M_c)}) + \fronorm{\tilde{M}} + \fronorm{M_c}.
\end{equation}

It remains to bound $\fronorm{\tilde{M}}$.  As in the proof of Lemma
\ref{teo:nucBound}, decompose $\tilde{M}$ as $\tilde{M} = \tilde{M}_1
+ \tilde{M}_2 + \ldots$ so that $\tilde{M}_1$ corresponds to the
largest $\bar{r}$ singular values of $\tilde{M}$, $\tilde{M}_2$
corresponds with the next $\bar{r}$ largest, and so on.  Just as
before,
\[
\fronorm{\tilde{M}} \leq \sum_i \fronorm{\tilde{M}_i} \leq
\fronorm{\tilde{M}_1} + \nucnorm{\tilde{M}}/\sqrt{\bar{r}}.
\] 
We now bound $\fronorm{\tilde{M}_1}$.  By the RIP,
\begin{align*}
  \fronorm{\tilde{M}_1} &\leq
  \frac{1}{\sqrt{1 - \delta_{\bar{r}}}} \twonorm{\cA(\tilde{M}_1)}\\
  & = \frac{1}{\sqrt{1 - \delta_{\bar{r}}}}\Bigl(
  \twonorm{\cA(\tilde{M}) - \sum_{i\geq 2} \cA(\tilde{M}_i)}\Bigr)\\
  &\leq \frac{1}{\sqrt{1 -
      \delta_{\bar{r}}}}\Bigl(\twonorm{\cA(\tilde{M})} + \sum_{i\geq
    2} \twonorm{\cA(\tilde{M}_i)}\Bigr).
\end{align*}
By the RIP again, $\twonorm{\cA(\tilde{M}_i)} \leq \sqrt{1 +
  \delta_{\bar{r}}} \fronorm{\tilde{M}_i}$, and so
\[
\sum_{i\geq 2} \twonorm{\cA(\tilde{M}_i)} \leq \sqrt{1 +
  \delta_{\bar{r}}} \, \sum_{i\geq 2} \fronorm{\tilde{M}_i} \leq \sqrt{1
  + \delta_{\bar{r}}} \, \frac{\nucnorm{\tilde{M}}}{\sqrt{\bar{r}}}.
\] 
This together with $\cA(\tilde{M}) = \cA(M_c)$
give 
\[
\fronorm{\tilde{M}} \leq \sqrt{\frac{1 + \delta_r}{1 -
    \delta_r}}\Bigl(\twonorm{\cA(M_c)} +
\frac{\nucnorm{\tilde{M}}}{\sqrt{\bar{r}}}\Bigr).
\] 
However, $\nucnorm{\tilde{M}} \leq \twonorm{\cA(M)}/\alpha \leq
\sqrt{\bar{r}} \twonorm{\cA(M)}/(\mu \sqrt{c})$ and, therefore,
\[
\fronorm{\tilde{M}} \leq C \twonorm{\cA(M_c)}.
\]
Inserting this into \eqref{eq:instBound} completes the proof of the
lemma.
\end{proof}

We are now in position to prove our main theorem concerning the
recovery of matrices with decaying singular values (Theorem
\ref{teo:instanceOpt}).  There are three cases to consider depending
on the number of singular values of $M$ standing above the noise
level.  In each case, we need the inequality
\begin{equation}
\label{eq:smallIncrease}
\fronorm{\cA(M_c)}^2 \leq 1.5 \fronorm{M_c}^2
\end{equation}
which holds with probability at least $1 - De^{-cn}$ for any measurement
ensemble satisfying $\eqref{eq:concentration}$ (including the Gaussian measurement ensemble).  Put $\lambda =
16n\sigma^2$ and recall the definition of $M_0$:
\[
M_0 = \sum_{i=1}^n \sigma_i(M) 1_{\{\sigma_i(M) \geq \lambda\}} \, u_i
v_i^*
\]
whose rank is exactly the number of singular values of $M$ above the
noise level.  There are three cases to consider depending mostly on the interplay between the singular values of $M$ and the noise level.

\subsection*{Case 1: high noise level}
Suppose $K(M_0; M) \leq \frac{\lambda^2}{4(1+\delta_1)} \bar{r}$.
Then $\rank(M_0) \leq \bar{r}$ and $\rank(\bar{M}) \leq \bar{r}$ by
definition of $\bar M$.  Hence, Lemma \ref{lem:RIPinst} gives
\[
\fronorm{\hat{M} - M}^2 \leq C \sum_{i=1}^n \min(n\sigma^2,
\sigma_i^2(M))
\] 
with probability at least $1 - 2e^{-cn}$.

\subsection*{Case 2: low noise level}
Suppose $K(M_0; M) > \frac{\lambda^2}{4(1+\delta_1)} \bar{r}$ and
$\rank(M_0) \geq \bar{r}$.  It follows from \eqref{eq:concentration}
that
\begin{equation}
\label{eq:aBound}
\twonorm{\cA(M - M_{\bar{r}})}^2 \leq \sqrt{1.5} \fronorm{M - M_{\bar{r}}}^2
\end{equation}
with probability at least $1 - De^{-cn}$.  Now, for the Gaussian measurement ensemble, the requirements of Lemma \ref{lem:lowNoise} are met with probability at least $1 - C e^{-cn}$.  Combining \eqref{eq:aBound} with Lemma \ref{lem:lowNoise} yields
\[\fronorm{\hat{M} - M} \leq C (\lambda \sqrt{\bar{r}} + \fronorm{M - M_{\bar{r}}})\]
and thus
\[
\fronorm{\hat{M} -M}^2 \leq 2 C^2 (\lambda^2 \bar{r} + \fronorm{M -
  M_{\bar{r}}}^2) = 2 C^2 \left(\sum_{i=1}^{\bar{r}} \min(\lambda^2,
  \sigma_i^2(M)) + \sum_{i=\bar{r} + 1}^n \sigma_i^2(M)\right).
\]
Since $\lambda = 16n\sigma^2$, this is \eqref{eq:instBound}.

\subsection*{Case 3: medium noise level}
Suppose $K(M_0; M) > \frac{\lambda^2}{4(1 + \delta_1)} \bar{r}$ and
$\rank(M_0) < \bar{r}$.  As in Case 2, we have
\[
\fronorm{\hat{M} -M}^2 \leq 2 C^2 (\lambda^2 \bar{r} + \fronorm{M -
  M_{\bar{r}}}^2).
\]
From $\lambda^2 \bar{r} < 4(1+ \delta_1) K(M_0; M)$, it follows that
\[
\fronorm{\hat{M} -M}^2 \leq 2 C^2 (\lambda^2\rank(M_0) + 4(1 +
\delta_1) \twonorm{\cA(M - M_0)}^2 + \fronorm{M - M_{\bar{r}}}^2).
\]
We also have $\twonorm{\cA(M-M_0)}^2 \leq 1.5 \fronorm{M - M_0}^2$
with probability at least $1 - De^{-cn}$.  Inserting this bound into
the previous equation, along with $\fronorm{M - M_{\bar{r}}} \leq
\fronorm{M - M_0}$, gives the desired conclusion.

These three cases comprise all possibilities. In short, the proof of
Theorem \ref{teo:instanceOpt} is complete.

\subsection{Extension of proofs to the solution to the Lasso \eqref{eq:lasso}}
\label{sec:extension}

In the sparse regression setup, Bickel et al.~\cite{Bickel07} showed
that the Dantzig Selector and the Lasso have analogous properties,
leading to analogous error bounds.  The analogies still hold in the
low-rank matrix recovery problem (for similar reasons).  In fact, all
of the theorems above also hold for the solution to
\eqref{eq:lasso} aside from a shift in those constants appearing in
the assumptions, and those appearing in the error bounds.  To see
this, note that our proofs only used two crucial properties about
$\hat{M}$:
\begin{enumerate}
\item $\nucnorm{\hat{M}} \leq \nucnorm{M}$, 
\item $\opnorm{\cA^*(\cA(\hat{M}) - y)} \leq \lambda$.
\end{enumerate}
The second property automatically holds for the solution to
\eqref{eq:lasso} (but with $\lambda$ replaced by $\mu$). This follows
from the optimality conditions which states that $\cA^*(y - \cA(\hat
M)) \in \partial \|\hat M\|_*$ where $\|\hat M\|_*$ is the family of
subgradients to the nuclear norm at the minimizer. Formally, let
$U\Sigma V^*$ be the SVD of $\hat M$, then
\[
\cA^*(y - \cA(\hat M)) =  \lambda(UV^* + W)
\]
for some $W$ obeying $\opnorm{W} \leq 1$ and $U^* W = 0, WV = 0$ (see
e.g.~\cite{CR08}).  Hence, the second property follows from
$\opnorm{UV^* + W} \leq 1$.

The first property does not necessarily hold for the matrix Lasso, but
a close enough approximation is verified (this is analogous to an
argument made in \cite{Bickel07}).  Suppose that $\opnorm{\cA^*(z)}
\leq c_0 \mu$ for a small constant $c_0$ (which, by Lemma
\ref{lem:opNormz}, holds with high probability for Gaussian noise if
$\mu = Cn\sigma^2$).  Then since $\hat{M}$ minimizes \eqref{eq:lasso},
we have
\[
\frac{1}{2} \twonorm{\cA(\hat{M}) - y}^2 + \mu \nucnorm{\hat{M}} \leq
\frac{1}{2} \twonorm{\cA(M) - y}^2 + \mu \nucnorm{M}.
\] 
Plug in $y = \cA(M) + z$ and rearrange terms to give
\[
\mu \nucnorm{\hat{M}} \le \frac{1}{2} \twonorm{\cA(\hat{M} - M)}^2 + \mu
\nucnorm{\hat{M}} \leq \<\hat{M} - M, \cA^*(z)\> + \mu \nucnorm{M}.
\] 
Since the nuclear norm and the operator norm are dual to each other,
we have $\<\hat{M} - M, \cA^*(z)\> \leq \nucnorm{\hat{M} - M} \cdot
\opnorm{\cA^*(z)} \leq c_0 \mu \nucnorm{H}$, where we use the notation
$H = \hat{M} - M$ as in the proof of Lemma \ref{teo:nucBound}. This
gives
\[
\nucnorm{\hat{M}} \leq c_0 \nucnorm{H} + \nucnorm{M},
\]
which nearly is the first property.  When $c_0$ is chosen to be a
small constant, this factor has no essential detrimental effects on
the proof.  In particular, \eqref{eq:cone} in the proof of Lemma
\ref{teo:nucBound} is replaced by
\[
(1-c_0) \nucnorm{H_c} \leq (1+c_0) \nucnorm{H_0} + 2 \nucnorm{M_c}.
\]
In particular, for $c_0 = 1/2$,
\[
\nucnorm{H_c} \leq 3 \nucnorm{H_0} + 4 \nucnorm{M_c}.
\]
The rest of the proofs follow.

\subsection{Proof of Theorem \ref{teo:minimax}}

We begin with a well-known lemma which gives the minimax risk for
estimating the vector $x \in \R^n$ from the data $y \in \R^m$ and the
linear model
\begin{equation}
  \label{eq:linmodel}
  y = Ax + z, 
\end{equation}
where $A \in \R^{m\times n}$ and the $z_i$'s are i.i.d.~${\cal
  N}(0,\sigma^2)$. 
\begin{lemma}
\label{teo:meanisminimax}
Let $\lambda_i(A^*A)$ be the eigenvalues of the matrix $A^*A$. Then
  \begin{equation}
    \label{eq:meanisminimax}
    \inf_{\hat x} \sup_{x \in \R^n} \E \|\hat x - x\|^2 = \sigma^2 \trace((A^*A)^{-1})= \sum_i
    \frac{\sigma^2}{\lambda_i(A^*A)}. 
  \end{equation}
  In particular, if one of the eigenvalues vanishes (as in the case in
  which $m < n$), then the minimax risk is unbounded.
\end{lemma}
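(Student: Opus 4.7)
The plan is to establish the minimax identity by proving matching upper and lower bounds.

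For the upper bound, I will exhibit an estimator achieving $\sigma^2 \trace((A^*A)^{-1})$. Assuming $A^*A$ is invertible, the maximum-likelihood (equivalently, ordinary least-squares) estimator is $\hat x_{\text{MLE}} = (A^*A)^{-1} A^* y$. Substituting $y = Ax+z$, one checks immediately that $\hat x_{\text{MLE}} - x = (A^*A)^{-1} A^* z$, so the estimator is unbiased with covariance $\sigma^2 (A^*A)^{-1}$, and its risk is exactly $\sigma^2 \trace((A^*A)^{-1})$ regardless of $x$. This yields the inequality $\inf_{\hat x} \sup_{x} \E \|\hat x - x\|^2 \le \sigma^2 \trace((A^*A)^{-1})$.

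For the lower bound I will use the standard Bayes-risk argument: average risk under any prior lower bounds the minimax risk. Take the Gaussian prior $x \sim \mathcal{N}(0, \tau^2 I_n)$. Under this prior and the Gaussian noise model, the joint distribution of $(x,y)$ is Gaussian, so the Bayes-optimal estimator is the posterior mean and the Bayes risk equals the trace of the posterior covariance, namely $\trace\bigl((\tau^{-2} I_n + \sigma^{-2} A^*A)^{-1}\bigr)$. This gives
\[
\inf_{\hat x} \sup_{x \in \R^n} \E \|\hat x - x\|^2 \;\ge\; \inf_{\hat x} \E_{x \sim \pi} \E_{y \mid x} \|\hat x - x\|^2 \;=\; \trace\bigl((\tau^{-2} I_n + \sigma^{-2} A^*A)^{-1}\bigr),
\]
and I will let $\tau \to \infty$. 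Diagonalizing $A^*A$ and applying monotone convergence on the eigenvalues, the right-hand side increases to $\sigma^2 \sum_i \lambda_i(A^*A)^{-1} = \sigma^2 \trace((A^*A)^{-1})$ when $A^*A$ is invertible, matching the upper bound.

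For the singular case (in particular $m<n$), at least one eigenvalue $\lambda_i(A^*A)$ vanishes; the corresponding term in $\trace((\tau^{-2} I_n + \sigma^{-2} A^*A)^{-1})$ equals $\tau^2$, so the Bayes risk tends to infinity as $\tau \to \infty$, forcing the minimax risk to be unbounded. The only real subtlety is the passage $\tau\to\infty$, which is handled eigenvalue-by-eigenvalue; no uniform integrability issue arises because everything is an increasing function of $\tau^2$. This is the main (and only modest) technical point; the rest is a direct computation with Gaussian conjugacy.
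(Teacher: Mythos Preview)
Your proof is correct and follows essentially the same strategy as the paper: an upper bound via the least-squares estimator and a lower bound via the Bayes risk under a Gaussian prior $x\sim\mathcal{N}(0,\tau^2 I)$, letting $\tau\to\infty$. The only cosmetic difference is that the paper first reduces to a diagonal model via the SVD of $A$ and then computes the scalar Bayes risks coordinatewise, whereas you work directly with the matrix posterior covariance $(\tau^{-2}I+\sigma^{-2}A^*A)^{-1}$; the two computations are equivalent after diagonalization.
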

\begin{proof}
  Suppose first that $A$ is the identity matrix. Then it is well known
  that the minimax risk is $n\sigma^2$ and is achieved by $\hat x =
  y$. To see this, recall that for any prior on $x$, the minimax risk
  is lower bounded by the Bayes risk. Consider then the prior which
  assumes that all the components of $x$ are i.i.d.~${\cal
    N}(0,\tau^2)$. Then the Bayes' estimator for this prior is the
  shrinkage estimate given by
\[
\hat{x}_i = \E(x_i | y_i) = \frac{\tau^2}{\tau^2+\sigma^2} y_i, 
\]
and the Bayes risk is
\[
\sum_{i = 1}^n \E(\hat{x}_i - x_i)^2 = n \sigma^2 \,
\frac{\tau^2}{\tau^2 + \sigma^2}.
\]
Clearly, as $\tau \goto \infty$, the lower bound on the minimax risk
goes to $n \sigma^2$. Since this quantity is the risk of the
maximum-likelihood estimate $y$, this proves the claim. Note that by a
simple rescaling argument, this also proves that the minimax risk for
estimating $x$ from $y_i = d_i x_i + z_i$ is $\sum_{i = 1}^n 1/d^2_i$.

We can now prove \eqref{eq:meanisminimax}. We will assume that $m \ge
n$ for simplicity since for $m < n$, the minimax risk is
unbounded. Let $U\Sigma V^*$ be the SVD of $A$, where $U$ is $m \times
n$, $\Sigma$ is $n \times n$ and $V$ is $n \times n$. All the
information about $x$ is in $U^*y$, and so we may just assume that the
data is given by 
\[
y' = U^* y = \Sigma V^* x + U^*z.
\]
Now $z' = U^*z$ is a Gaussian vector with i.i.d.~${\cal
  N}(0,\sigma^2)$ components. Further, set $x' = V^* x$. Since $V$ is
an orthogonal matrix, the minimax risk for estimating $x$ or $x'$ is
the same and, therefore, our problem is that of computing the minimax
risk for estimating $x'$ from
\[
y' = \Sigma x' + z'.
\]
Since $\Sigma$ is a diagonal matrix with diagonal elements
$\sqrt{\lambda_i(A^*A)}$, our previous result applies and establishes
\eqref{eq:meanisminimax}. 
\end{proof}

We are now in position to prove Theorem \ref{teo:minimax}. The set of
rank-$r$ matrices is (much) larger than the set of matrices of the
form
\[
M = U R,
\]
where $U$ is a {\em fixed} orthogonal $n \times r$ matrix with
orthonormal columns (note that the matrices of this form have a fixed
$r$-dimensional column space). Thus,
\[
\inf_{\hat{M}} \sup_{M : \rank(M) = r} \E\|\hat M - M\|_F^2 \ge
\inf_{\hat{M}} \sup_{M : M = UR} \E\|\hat M - M\|_F^2. 
\]
Knowing that $M = UR$ for some unknown $r \times n$ matrix $R$, one
can of course limit ourselves to estimators of the form $\hat M = U
\hat R$, and since
\[
\E \|\hat M - M\|_F^2 = \E \|U\hat R - UR\|_F^2 =  \E \|\hat R - R\|_F^2,  
\]
the minimax risk is lower bounded that by of estimating $R$ from the data 
\[
y = {\cal A}_U(R) + z, 
\]
where ${\cal A}_U$ is the linear map \eqref{eq:AU}.  We then apply
Lemma \ref{teo:meanisminimax} to conclude that the minimax rate is
lower bounded by
\[
\sum_{i} \frac{\sigma^2}{\lambda_i({\cal A}_U^* {\cal A}_U)}. 
\]
The claim follows from the simple lemma below. 

\begin{lemma}
  \label{teo:AU}
  Let $U$ be an $n \times r$ matrix with orthonormal columns. Then all
  the eigenvalues of ${\cal A}_U^* {\cal A}_U$ belong to the interval
  $[1-\delta_r, 1 + \delta_r]$. 
\end{lemma}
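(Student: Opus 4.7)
The plan is to translate the statement about eigenvalues of $\mathcal{A}_U^*\mathcal{A}_U$ into a Rayleigh quotient inequality, and then apply the RIP directly to the matrix $UR$, using the fact that $UR$ has rank at most $r$ whenever $R \in \mathbb{R}^{r \times n}$.

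First I would note that $\mathcal{A}_U^*\mathcal{A}_U$ is a self-adjoint positive semidefinite operator on $\mathbb{R}^{r \times n}$ (equipped with the Frobenius inner product), so its eigenvalues lie between the minimum and maximum of the Rayleigh quotient
\[
\frac{\langle R, \mathcal{A}_U^*\mathcal{A}_U(R)\rangle}{\|R\|_F^2} = \frac{\|\mathcal{A}_U(R)\|_{\ell_2}^2}{\|R\|_F^2} = \frac{\|\mathcal{A}(UR)\|_{\ell_2}^2}{\|R\|_F^2}.
\]
The second equality uses the definition $\mathcal{A}_U(R) = \mathcal{A}(UR)$.

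The key step is now to observe that since $U$ has only $r$ columns, the matrix $UR$ has rank at most $r$, so the RIP \eqref{eq:rip} applies:
\[
(1-\delta_r)\|UR\|_F^2 \le \|\mathcal{A}(UR)\|_{\ell_2}^2 \le (1+\delta_r)\|UR\|_F^2.
\]
Since $U$ has orthonormal columns, $U^*U = I_r$, which gives $\|UR\|_F^2 = \trace(R^*U^*UR) = \trace(R^*R) = \|R\|_F^2$. Combining this identity with the RIP bound yields
\[
(1-\delta_r)\|R\|_F^2 \le \langle R, \mathcal{A}_U^*\mathcal{A}_U(R)\rangle \le (1+\delta_r)\|R\|_F^2
\]
for every $R \in \mathbb{R}^{r\times n}$, which is exactly the claim about the eigenvalues.

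There is no real obstacle here: the proof is essentially an unpacking of the definitions together with the single observation that left-multiplication by $U$ preserves Frobenius norm and produces a matrix of rank at most $r$. The only thing to be careful about is that the RIP is stated for matrices in $\mathbb{R}^{n_1\times n_2}$ rather than for the operator $\mathcal{A}_U$ directly, so one must explicitly invoke the rank constraint on $UR$ to justify applying \eqref{eq:rip}.
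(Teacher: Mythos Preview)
Your proof is correct and follows essentially the same approach as the paper's: both reduce the eigenvalue claim to bounding the Rayleigh quotient $\|\mathcal{A}(UR)\|_{\ell_2}^2/\|R\|_F^2$, then apply the RIP to $UR$ (which has rank at most $r$) together with the isometry $\|UR\|_F = \|R\|_F$.
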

\begin{proof}
By definition,
\[
\lambda_{\text{min}}({\cal A}_U^* {\cal A}_U) = \inf_{\|R\|_F \le 1} \<R,
{\cal A}_U^* {\cal A}_U (R)\>
\]
and similarly for $\lambda_{\text{max}}({\cal A}_U^* {\cal A}_U)$ with
a $\sup$ in place of $\inf$. Since
\[
\<R, {\cal A}_U^* {\cal A}_U (R)\> = \| {\cal A}_U (R)\|^2_{\ell_2} =
\|{\cal A}(UR)\|^2,  
\]
the claim follows from 
\[
(1-\delta_r) \|UR\|_F^2 \le \|{\cal A}(UR)\|^2 \le (1+\delta_r) \|UR\|_F^2, 
\]
which is valid since $\rank(UR) \le r$ together with $\|UR\|_F^2 =
\|R\|_F^2$.
\end{proof}

\section{Discussion}
\label{sec:conclusion}
Using RIP-based analysis, this paper has shown that low-rank matrices
can be stably recovered via nuclear-norm minimization from nearly the
minimal possible number of linear samples.  Further, the error bound
is within a constant of the expected minimax error, and of an expected
oracle error, and extends to the case when $M$ has full rank.

This work differs from the main thrust of the recent literature on
low-rank matrix recovery, which has concentrated on the `RIPless'
matrix completion problem.  An interesting observation regarding
matrix completion is that when the measurements are randomly chosen
entries of $M$, one requires at least about $n r \log n$ measurements to
recover $M$ \textit{by any method} when $\rank(M) = O(1)$
\cite{CR08, CT09}. In contrast, this paper shows that on the order of
$n r$ measurements are enough provided these are sufficiently random.

The popularity of the matrix completion model stems from the fact that
this setup currently dominates the applications of low-rank matrix
recovery.  There are far fewer applications in which the measurements
are random linear combinations of many entries of $M$ (quantum-state
tomography is a notable application though).  As a great deal of
attention is given to low-rank matrix modeling these days, with new
applications being discovered all the time, this may change rapidly.
We hope that our theory encourages further applications and research
in this direction.

\small
\bibliographystyle{plain}
\bibliography{ref}

\end{document}